\newtheorem{theorem}{Theorem}
\newtheorem{corollary}[theorem]{Corollary}
\newtheorem{lemma}[theorem]{Lemma}
\newtheorem{invariant}[theorem]{Invariant}
\renewcommand{\split}{\texttt{Split}}
\newcommand{\fail}{\texttt{Fail}}
\newcommand{\lefty}{\texttt{Left}}
\newcommand{\done}{\texttt{Done}}
\newcommand{\sync}{\texttt{Sync}}
\title{An {$O(\log^{3/2}n)$}{} Parallel Time Population Protocol for Majority with {$O(\log n)$}{} States\thanks{Supported by ISF grants no. 1278/16 and 1926/19, by a BSF grant 2018364, and by an ERC grant MPM under the EU's Horizon 2020 Research and Innovation Programme (grant no. 683064).}}
\begin{document}

\renewcommand\Affilfont{\normalsize}

\author[]{Stav Ben-Nun}
\author[]{Tsvi Kopelowitz}
\author[]{Matan Kraus}
\author[]{Ely Porat}

\affil[]{Department of Computer Science, Bar-Ilan University, Ramat Gan, Israel}
\affil[]{\texttt{stav.bennun@live.biu.ac.il, kopelot@gmail.com, krausma@biu.ac.il, porately@cs.biu.ac.il}}

\date{\vspace{-1.5cm}}
\maketitle
	
	\begin{abstract}
		In population protocols,  the underlying distributed network consists of $n$ nodes (or agents), denoted by $V$, and a \emph{scheduler} that continuously selects uniformly random pairs of nodes to \emph{interact}.
		When two nodes interact, their states are updated by applying a state transition function that depends only on the states of the two nodes prior to the interaction.
		The efficiency of a population protocol is measured in terms of both time (which is the number of interactions until the nodes collectively have a valid output) and the number of possible states of nodes used by the protocol.
		By convention, we consider the parallel time cost, which is the time divided by $n$.

		In this paper we consider the \emph{majority} problem, where each node receives as input a color that is either black or white, and the goal is to have all of the nodes output the color that is the majority of the input colors.
		We design a population protocol that solves the majority problem in $O(\log^{3/2}n)$ parallel time, both with high probability and in expectation, while using $O(\log n)$ states. Our protocol improves on a recent protocol of Berenbrink et al. that runs in $O(\log^{5/3}n)$ parallel time, both with high probability and in expectation, using $O(\log n)$ states.

	\end{abstract}

	\section{Introduction}

	Population protocols, introduced by Angluin et al.~\cite{AADF2004}, have been extensively studied in recent years~\cite{AAE2008,AAEGR2017,AAG18,AG2015,AGV2015,AR2009,BCER2017,BEFKKR2018,DV2010,KU2018,MNRS2014,DBLP:conf/dna/AlistarhDKSU17,DBLP:conf/spaa/GasieniecSU19,DBLP:conf/icalp/CzyzowiczGKKSU15}. In these protocols, the underlying distributed network consists of $n$ nodes (or agents), denoted by $V $, and a \emph{scheduler} that continuously selects pairs of nodes to \emph{interact}.
	Each node $u$ stores its own local state $s_u\in S$, and when two nodes $u,v\in V$ interact, the states of $u$ and $v$ after the interaction are determined solely by the states of $u$ and $v$ prior to the interaction.
	Following previous work, we assume that any two nodes are allowed to interact, and that each selection of a pair of nodes for interaction is uniformly random from the pairs of nodes in $V$.
	The objective of population protocols for a given problem is to extract an output from $s_u$ for each $u\in V$ so that the collective configuration of outputs defines a \emph{valid} solution to the problem, and the valid solution will never change, regardless of future interactions. When the system reaches a valid output configuration that never changes, the system is said to be \emph{stable}.
	The number of interactions until the system is stable is called the \emph{stabilization} time of the protocol.
	By convention, we consider the parallel stabilization time, which is the stabilization time divided by $n$.

	The efficiency of a population protocol is measured in terms of both the parallel stabilization time, either with high probability or in expectation, and $|S|$, which directly affects the number of bits that each node must invest in the protocol.
	Thus, if a protocol runs for $f(n)$ parallel time and has $|S|= g(n)$, then we define the efficiency of the protocol by \textless $f(n),g(n)$\textgreater.
	Typically, one is interested in protocols where $|S| = O(\text{poly-}\log(n))$, in which case each node stores only $O(\log \log n)$ bits.

	\paragraph{Majority.}
	In this paper we consider the \emph{majority} problem in population protocols, where each node $u$ receives an input color from $\{b,w\}$, where $b$ should be interpreted as the color \emph{black} and $w$ should be interpreted as the color \emph{white}.
	The output of a node is also a color from $\{b,w\}$.
	An output configuration is valid if all of the nodes output the same color, and more than half of the nodes have this color as their input (the input is assumed to have a well defined majority).

	\paragraph{Previous work.}
	Mertzios et al.~\cite{MNRS2014} and Draief and Vojnović~\cite{DV2010} designed \textless$O(n\log n), O(1)$\textgreater\space protocols for majority. Mertzios et al.~\cite{MNRS2014} named their protocol the \emph{ambassador} protocol, and proved that the parallel stabilization time is $O(n\log n)$.
	Alistarh et al.~\cite{AAEGR2017} recently refined ideas from Alistrah et al.~\cite{AGV2015}, in order to design an \textless$O(\log^3 n), O(\log^2 n)$\textgreater\space majority protocol.
	
	Angluin et al.~\cite{AAE2008} designed a monte carlo majority protocol that relies on the existence of a unique leader in $V$, but has a low probability of failing. The complexity of their protocol is, w.h.p.\footnote{In this paper, w.h.p. stands for \emph{with high probability}. An event $\mathcal E$ is said to happen  w.h.p. if $\Pr[\mathcal E] \ge 1-n^{-\Omega(1)}$.}, \textless$O(\log^2 n), O(1)$\textgreater.
	Their protocol introduces the idea of \emph{cancellations} and \emph{duplications} or \emph{splitting}. The idea behind cancellations is to introduce an empty color $e$ so that when two nodes with different non-empty colors interact, the nodes change their colors to $e$, which does not affect the majority, but the ratio between the number of majority colored nodes and the number of minority colored nodes increases.
	The intuitive idea behind splitting is that when two nodes interact where one has a non-empty color and the other has an empty color, then the color from the colored node is duplicated to also be the color of the non-empty node. Since interactions are chosen uniformly at random, intuitively, the duplications do not change the ratio between the number of majority colored nodes and the number of minority colored nodes, but the difference between these numbers increase over time.

	Bilke et al.~\cite{BCER2017} showed how to implement the protocol of Angluin et al.~\cite{AAE2008} to work without the existence of a unique leader. The complexity of their protocol is \textless$O(\log^2 n), O(\log^2 n)$\textgreater.
	Since our protocol uses ideas from the protocol of Bilke et al.~\cite{BCER2017}, we provide details of the protocol in Section~\ref{section:theclassicprotocol}.
	Alistarh et al.~\cite{AAG18} improved the complexity to \textless$O(\log^2 n), O(\log n)$\textgreater.
	The state-of-the-art population protocol for the majority problem is the \textless$O(\log^{5/3} n), O(\log n)$\textgreater\space protocol of Berenbrink et al.~\cite{BEFKKR2018}, which builds upon the protocol of Bilke et al.~\cite{BCER2017}; see Section~\ref{section:five-thirds}.

	For lower bounds, Alistarh et al.~\cite{AAG18} proved that under the conditions of monotonicity and output dominance, any majority protocol that stabilizes in expected $n^{1-\Omega(1)}$ parallel time must use at least $\Omega(\log n)$ states.
	
	\paragraph{Our results and techniques.}
	In this paper, we describe a new \textless$O(\log^{3/2} n), O(\log n)$\textgreater\space protocol for majority, thereby improving the parallel stabilization time of the state-of-the-art protocol of Berenbrink et al.~\cite{BEFKKR2018}. 
	Notice that, by the lower bound of Alistarh et al.~\cite{AAG18}, for our runtime, the number of states is optimal.
	
	Our protocol is based on the protocols of Berenbrink et al.~\cite{BEFKKR2018} and Bilke et al.~\cite{BCER2017}. In particular, the time cost of the protocol of Berenbrink et al.~\cite{BEFKKR2018} is $O(\log^{2-a}n + \log^{1+a}n)$, which is minimized when $a=1/2$.
	However,  in order to guarantee the high probability bounds, Berenbrink et al.~\cite{BEFKKR2018} require $a\le 1/3$, and so the runtime becomes $O(\log^{2-a}n + \log^{1+a}n) = O(\log^{5/3} n)$.
	
	In order to describe our protocol, we first describe in Section~\ref{section:theclassicprotocol} the protocol of Bilke et al~\cite{BCER2017}, which is the basis of the protocol of Berenbrink et al.~\cite{BEFKKR2018}.
	Then, in Section~\ref{section:five-thirds} we describe the protocol of Berenbrink et al.~\cite{BEFKKR2018}, and in particular we detail the reason for why Berenbrink et al.~\cite{BEFKKR2018} require $a\le 1/3$.
	Intuitively, the main challenge is due to maintaining counters in nodes that are incremented based on their interactions, and since the interactions are random, it is likely that values of counters deviate significantly from the average counter value.
	
	Our protocol is designed with the goal of removing the requirement that $a\le 1/3$.
	In particular, we employ a power-of-two choices strategy (\cite{ABK1999}) to reduce the deviation from the average counter value.
	The main challenge in implementing a power-of-two choices strategy is that we assume that interactions are symmetric, so if two nodes with the same state interact, their states after the interactions are necessarily the same.
	However, existing power-of-two choices strategies assume that ties can somehow be broken.
	In order to overcome this challenge, we partition the nodes that are responsible for counters to two equal sized sets, called $L$ and $R$, and use the $\lefty$ power-of-two choices strategy, which was inspired by V{\"{o}}cking~\cite{V1999} and analyzed by Berenbrink et al~\cite{BCSV2006}, to control the deviation. We remark that the proof of correctness of the protocol of Berenbrink et al~\cite{BEFKKR2018} becomes significantly simpler when implementing the $\lefty$ power-of-two choices strategy, as we describe in Section~\ref{sec:ourprotcol}.

	Moreover, in order to guarantee that there are enough nodes that are responsible for the counters (which the algorithm requires in order to progress fast enough), we show in Section~\ref{section:lognstates} that if our strategy for assigning nodes to $L$ and $R$ fails, then running a short version of the protocol of Bilke et al~\cite{BCER2017} suffices for solving majority.

	\subsection{Preliminaries}
	
	In order to differentiate between the protocol of Bilke et al.~\cite{BCER2017}, the protocol of Berenbrink et al.~\cite{BEFKKR2018}, and our new protocol, while avoiding clutter and redundancy, we use the following names for the three protocols, respectively: the $2$-protocol, the $5/3$-protocol, and the $3/2$-protocol.
	
	Our protocols often uses the \emph{broadcast} protocol, in which a node $u$ broadcasts some information to all of the other nodes in $V$. The complexity of a broadcast is \textless$O(\log n), O(1)$\textgreater, both w.h.p. and in expectation; see \cite{AG2015,AAE2008,BCER2017}.

	\section{An \texorpdfstring{\textless$O(\log^2 n),O(\log^2 n)$\textgreater}{} Protocol}\label{section:theclassicprotocol}
	Bilke et al. in~\cite{BCER2017} introduced an \textless$O(\log^2 n),O(\log^2 n)$\textgreater\space majority protocol.
	For our purposes, we provide an overview of the $2$-protocol in~\cite{BCER2017} while
	focusing only on the parallel time cost, and ignoring the number of states.
	Their protocol uses a notion of an \emph{empty} color, which we denote by $e$, together with a method of \emph{cancellation} and \emph{splitting}.
	An execution of the protocol has a sequence of at most $\log n + 2$ \emph{phases}, and each phase is composed of four stages: a cancellation stage, a first buffer stage, a splitting stage and second buffer stage.
	In the $2$-protocol, a node $u\in V$ stays in the first three stages of a phase for $\Theta(\log n)$ interactions and in the last stage for $O(\log n)$ (some nodes may be \emph{pulled out} of the last stage to the next phase).
	
	\paragraph{States.} Each node $u$ stores the following information:
	\begin{itemize}
		\item The color of $u$, $c_u \in \{b,w,e\}$.
		\item A counter $\alpha_u$, storing the phase number.
		\item A counter $\beta_u$, storing the interaction number in the current phase.
		\item A flag $\done_u$, indicating that $u$ is broadcasting $c_u$ as the majority color.
		\item A flag $\split_u$, indicating whether or not $u$ has already participated in a split during this phase (the split operation is defined below).
		\item A flag $\fail_u$, indicating whether or not $u$ has failed.
	\end{itemize}
	
	At the initialization, all of the flags and counters are set to $0$.
	Notice that it is straightforward to derive the stage of $u$ from $\beta_u$.
	
	\paragraph{The state transition function.} A node $u$ with $\fail_u=1$ broadcasts $\fail_u$. Otherwise, if $\done_u=1$ then $u$ broadcasts $\done_u$ and $c_u$.
	However, if during the broadcasting of $\done_u$ and $c_u$, $u$ interacts with a node $v$ where $c_v\ne c_u$, then $u$ sets $\fail_u\leftarrow 1$.
	
	Whenever a node $u$ interacts with node $v$, if $\done_u= \done_v=0$, and $\fail_u=\fail_v=0$, then the following rules define the state transition function (the rules all happen in parallel):
	\begin{itemize}
		\item The counters $\alpha_u,\beta_u$ are updated as follows: If $u$ and $v$ are not in the same stage and are also not in consecutive stages, then set $\fail_u \leftarrow 1$. Thus, for the rest, assume that $u$ and $v$ are either in the same stage or in consecutive stages.
		Increment $\beta_u$, and if $u$ reached the end of a second buffer stage then set $\beta_u\leftarrow0$ and increment $\alpha_u$. If $\alpha_u = \log n+2$ then set $\fail_u\leftarrow1$. If $u$ is in a second buffer stage and $v$ is in a cancellation stage, then set $\alpha_u \leftarrow \alpha_v$ (thereby pulling $u$ into the next phase), and $\beta_u \leftarrow 0$.
		If $u$ enters the second buffer stage:
		\begin{itemize}
			\item If $\split_u=0$, then set $\done_u \leftarrow 1$.
			\item If $\split_u=1$, then set $\split_u \leftarrow 0$.
		\end{itemize}

		\item  If $u$ and $v$ are both in the canceling stage, if either $c_u = w$ and $c_v = b$, or $c_u = b$ and $c_v = w$, then set $c_u\leftarrow e$.
		\item  If $u$ and $v$ are both in the splitting stage, $c_u=  e$, $c_v \ne e$, and $\split_v = 0$, then perform a \emph{split} operation on $v$ and therefore set $c_u \leftarrow c_v$ and  $\split_u \leftarrow 1$.
		\item  If $u$ and $v$ are both in the splitting stage, $c_u\ne e$, $c_v =e$ , and $\split_u = 0$, then perform a \emph{split} operation on $u$ by setting $\split_u \leftarrow 1$.
		
	\end{itemize}
	
	Notice that the protocol uses the $\split$ flags in order to guarantee that each node does not participate in more than 1 split operation per phase.
	
	\paragraph{Analysis.} Let $w_i, b_i$ be the number of white nodes and black nodes, respectively, entering phase $i$ and let $d_i = |w_i-b_i|$.
	Notice that $d_i$ does not change during a cancellation stage.
	
	Inspired by Bilke et al.~\cite{BCER2017}, Berenbrink et al.~\cite{BEFKKR2018} proved the following.
	\begin{lemma}\label{lem:classic-cancel}
		Consider the $i$th phase of the $2$-protocol. Then, w.h.p., if $|w_i-b_i|<n/3$, then after the cancellation stage of the phase, at least $6n/10$ of the nodes have an empty color, and in the splitting stage of the phase, for every node $u$ that enters the splitting stage with a non-empty color, $u$ participates in a split.
	\end{lemma}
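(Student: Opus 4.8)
The plan is to reduce both parts to a single drift-and-concentration estimate on the number of minority-colored nodes during the cancellation stage, and then derive the splitting claim by a coupon-collector argument. Assume without loss of generality that $w$ is the majority color and write $d:=d_i=w_i-b_i\ge0$; entering the cancellation stage there are $b_i$ black nodes and $w_i=b_i+d$ white nodes. Throughout the cancellation and first buffer stages the only color-changing event is a cancellation, which simultaneously deletes one black and one white node; hence the number $B$ of black nodes is non-increasing and $W-B=d$ stays fixed (extending the observation in the statement), so at any point in these two stages the number of non-empty nodes equals $2B+d$. Since $d<n/3$, the inequality $B\le n/30$ forces $2B+d<n/15+n/3=4n/10$, i.e.\ at least $6n/10$ empty nodes; so for the first claim it suffices to show that, w.h.p., $B\le n/30$ by the time the last node leaves the cancellation stage (it then persists through the first buffer stage, $B$ being non-increasing).

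To prove that, I would first invoke the synchronization properties of the $2$-protocol~\cite{BCER2017} --- the two buffer stages exist precisely to keep the nodes' phase clocks tightly grouped --- to obtain that, w.h.p., there is a window $I$ of $\Theta(n\log n)$ consecutive interactions during which all $n$ nodes are simultaneously in the cancellation stage of phase $i$. During $I$, every interaction between two opposite-colored nodes is a cancellation, so if $B_k$ denotes the number of black nodes after the $k$-th interaction of $I$, then, conditioned on the history, interaction $k{+}1$ is a cancellation with probability $\tfrac{2B_k(B_k+d)}{n(n-1)}\ge\tfrac{2B_k^2}{n(n-1)}$, which is at least $\tfrac1{450}$ whenever $B_k\ge n/30$. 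Since $B$ can drop by at most $n$ in total, and since (as long as it stays above $n/30$) the number of cancellations in the first $k$ interactions of $I$ stochastically dominates $\mathrm{Bin}(k,\tfrac1{450})$, a Chernoff bound gives that within the first $O(n)\ll|I|$ interactions of $I$ the count $B$ must already have fallen below $n/30$ --- otherwise more than $n$ cancellations would have occurred, which is impossible --- so $B\le n/30$ w.h.p., proving the first claim. (When $d=\Theta(n)$ the drift is in fact multiplicative, $\ge\tfrac{2d}{n(n-1)}B_k$, so the threshold is reached even sooner; but the crude bound already fits with room to spare in the $\Theta(\log n)$-time cancellation stage.)

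For the splitting claim, the first part gives that at most $4n/10$ nodes enter the splitting stage non-empty and at least $6n/10$ enter empty, and all of them enter with $\split=0$. Every split recolors one empty node and sets the $\split$-flag of the colored node that initiated it, so the number of colored nodes with $\split=0$ decreases by one per split; hence there are at most $4n/10$ splits, and the number of empty nodes during the splitting stage never falls below $6n/10-4n/10=2n/10$. Now fix a node $u$ that is non-empty in the splitting stage: as long as $\split_u=0$, any interaction of $u$ with an empty node that is also in the splitting stage performs a split on $u$ and sets $\split_u\leftarrow1$; invoking synchronization again, at each of the $\Theta(\log n)$ interactions $u$ makes while in the splitting stage its partner is such a node with conditional probability at least $\tfrac{2n/10}{n-1}\ge\tfrac16$, regardless of the past, so $u$ never splits with probability at most $(5/6)^{\Theta(\log n)}=n^{-\Omega(1)}$. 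A union bound over the at most $n$ nodes completes the argument.

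The step I expect to be the real obstacle is not the drift estimate itself --- because we only need $B$ to reach a constant fraction of $n$, the count stays large throughout and a crude Chernoff bound suffices, with ample slack in the $\Theta(\log n)$ time budget --- but the bookkeeping around (de)synchronization: making rigorous the existence of a long window in which all (or at least a $\Theta(n)$ fraction of) the nodes are simultaneously in the relevant stage, and checking that the behavior of nodes crossing between stages does not spoil the invariants $W-B=d$ and the monotonicity of $B$, or the split-counting above. This is exactly where one leans on the analysis of the $2$-protocol in~\cite{BCER2017}; granting that, the rest is routine.
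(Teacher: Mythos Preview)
The paper does not actually prove this lemma: it is stated with attribution to Berenbrink et al.~\cite{BEFKKR2018} (building on Bilke et al.~\cite{BCER2017}) and used as a black box. So there is no in-paper proof to compare against directly. That said, the paper \emph{does} prove the analogous statements for its own protocol, namely Lemma~\ref{lemma:ourcancel} (cancellation leaves $\ge 6n/10$ empty nodes) and Lemma~\ref{lemma:oursplit} (few nodes fail to split), and your argument is in the same spirit as those.

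For the cancellation part, the paper's proof of Lemma~\ref{lemma:ourcancel} partitions the stage into periods of $n$ interactions and shows a multiplicative contraction $z_{i+1}\le \tfrac{41}{42}z_i$ of the minority count per period via a Chernoff bound, reaching $z\le n/40$ after constantly many periods. Your approach is cruder but equally valid: you fix a single threshold $B\le n/30$, lower-bound the cancellation probability by a constant while $B$ is above threshold, and argue by contradiction that $B$ must cross the threshold within $O(n)$ interactions (else too many cancellations). Both arguments fit comfortably in the $\Theta(\log n)$-interaction budget; the paper's period-by-period version tracks the decay more finely, but for reaching a constant-fraction threshold your version is simpler and just as good.

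For the splitting part, your argument essentially matches Lemma~\ref{lemma:oursplit}: lower-bound the number of empty partners by a constant fraction of $n$ throughout the stage (you make explicit the counting observation that at most $4n/10$ splits can occur, so empty nodes never drop below $2n/10$), then bound the per-interaction split probability by a constant and take a union bound over $\Theta(\log n)$ interactions. The paper uses the same scheme with slightly different constants.

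Your own diagnosis of the soft spot is accurate: the substantive work is in the synchronization bookkeeping (existence of a long window where essentially all nodes are simultaneously in the relevant stage), which both you and the paper offload to the buffer-stage analysis of the $2$-protocol in~\cite{BCER2017}. Granting that, your proposal is correct and aligned with the paper's method for the corresponding lemmas.
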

	
	An immediate conclusion from Lemma~\ref{lem:classic-cancel} is that if $|w_i-b_i|<n/3$, then, w.h.p., $d_{i+1} = 2d_i$.
	Moreover, if for all $i\le j$, $|w_i-b_i|<n/3$, then, w.h.p., $d_j = 2^j\cdot d_0$, and so  after at most $\log (n/d_0) \leq \log n$ phases, the algorithm reaches a phase $k$ where $|w_k-b_k|\geq n/3$.
	The first such phase is called a \emph{critical} phase.
	Bilke et al.~\cite{BCER2017} proved the following lemma regarding critical phases.
	
	\begin{lemma}\label{lem:critical-phase}
		Once the $2$-protocol reaches a critical phase, then, w.h.p., the protocol undergoes at most  two more phases until there exists a node $u$ with $c_u \ne e$ that does not split in the splitting stage. Moreover, w.h.p., $c_u$ is the only color left in the system.
	\end{lemma}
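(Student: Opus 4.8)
The plan is to follow the number of white nodes $w$, of black nodes $b$, and of empty nodes $n-w-b$ through the cancellation and splitting stages immediately after the critical phase, and to pinpoint the first splitting stage in which the white nodes outnumber the empty nodes: at that moment some white node has no empty partner and is forced to skip its split. Assume without loss of generality that white is the input‑majority color. Before the critical phase every phase is non‑critical, so Lemma~\ref{lem:classic-cancel} applies to each and $d_i=w_i-b_i$ doubles from phase to phase (in particular $w_i>b_i$ throughout); hence, writing $W:=d_k$ for the critical phase $k$, we have $W\ge n/3$ (and, when $k\ge 1$, also $W=2d_{k-1}<2n/3$).

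First I would show that, w.h.p., the cancellation stage of the critical phase eliminates \emph{all} black nodes. Since $w-b=W\ge n/3$ is preserved by cancellations, the white nodes outnumber the black nodes by at least $n/3$ at every moment of this stage, so in each of its $\Theta(\log n)$ interactions a fixed black node meets a white node with probability at least $1/3$; a Chernoff bound together with a union bound over the at most $n$ black nodes then gives that w.h.p.\ every black node is cancelled (the same bookkeeping that proves Lemma~\ref{lem:classic-cancel}, only with a larger surplus of white nodes, so no black node can ``hide''). Consequently, w.h.p.\ the splitting stage of phase $k$ starts with exactly $W$ white nodes, $0$ black nodes, and $n-W$ empty nodes.

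Next comes the main dichotomy. In the splitting stage of phase $k$ each white node splits at most once and each split turns one empty node into a white node, so either (i) some white node reaches the end of the stage with $\split=0$ --- and we are done already inside phase $k$ --- or (ii) all $W$ white nodes split, which is possible only if $2W\le n$ and in which case phase $k+1$ begins with $2W$ white nodes, $0$ black nodes, and $n-2W$ empty nodes. In case (ii) the cancellation stage of phase $k+1$ does nothing (no black nodes), and in its splitting stage there are $2W\ge 2n/3$ white nodes but only $n-2W\le n/3<2W$ empty nodes; since each split consumes a distinct empty node, at most $n-2W$ of the $2W$ white nodes can split, so at least $4W-n\ge n/3>0$ of them finish with $\split=0$. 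Either way, within at most one further phase there is a node $u$ with $c_u=w\ne e$ that does not split in the splitting stage.

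Finally, in the ensuing second buffer stage this node $u$ sets $\done_u\leftarrow 1$ and begins broadcasting that $w$ is the majority color; because no black node is present, the broadcast never makes $u$ fail, and (completing, w.h.p., within $O(\log n)$ further interactions, with empty nodes adopting the broadcast color --- a detail of the full protocol suppressed in the overview of Section~\ref{section:theclassicprotocol}) it leaves every node colored $w$, which is indeed the input majority. I expect the only real obstacle to be the first step: promoting the heuristic ``all black nodes get cancelled'' to a clean high‑probability claim that is robust to the $\beta$‑counter desynchronization the protocol permits --- nodes in merely consecutive stages do not cancel, which is exactly what the buffer stages are there to repair. This is also where the statement's slack lies: should a handful of black nodes survive the cancellation stage of phase $k$, one additional cancellation stage (hence one more phase, for a total of two) removes them before the endgame above is triggered.
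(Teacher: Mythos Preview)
The paper does not give its own proof of this lemma; it is quoted from Bilke et al.\ without argument, so there is no in-paper proof to compare your sketch against.

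On its own merits your sketch is sound. The pigeonhole step is correct: once only the majority color and empty remain with $W\ge n/3$ majority nodes, either some majority node already fails to split in phase $k$, or all split and phase $k{+}1$ opens with $2W\ge 2n/3$ majority nodes versus at most $n/3$ empty, forcing a non-split there --- so one extra phase suffices for the existence claim, and the lemma's second phase of slack is indeed available for the purpose you assign it. The one gap is exactly the one you flag: upgrading ``each black node meets a white node with probability at least $1/3$ per interaction'' to ``w.h.p.\ \emph{every} black node is cancelled'' requires the partner also to be in the cancellation stage, which is where the buffer-stage synchronization guarantee must be invoked explicitly (and where the hidden constant in the $\Theta(\log n)$ stage length must be large enough for the union bound over all black nodes to go through). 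For comparison, later in the paper the analogous existence claim for the $3/2$-protocol is handled by a pure doubling-to-contradiction argument --- if every colored node split for two more rounds the majority--minority gap would exceed $n$ --- which sidesteps your step~2 entirely; but that route alone does not deliver the ``moreover'' clause, for which eliminating the minority color, as you do, is genuinely required.
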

	
	Notice that when there exists a node $u$ with $c_u \ne e$ that does not split in the splitting stage, then $\done_u$ is set to $1$ and $u$ broadcasts $c_u$, which costs another $O(\log n)$ parallel time.
	Finally, Bilke et al.~\cite{BCER2017} show that throughout the algorithm, w.h.p., there is never an interaction between two nodes that are at least two stages apart.
	Thus, the probability that a flag $\fail_u$ for some node $u$ is set to $1$ throughout the execution of the protocol is polynomially small in $n$.
	
	In order to guarantee that the protocol is always correct, the protocol executes the ambassador protocol~\cite{MNRS2014} in the background, which uses $O(1)$ states and $O(n\log n)$ parallel time (or $O(n^2\log n)$ interactions), both w.h.p. and in expectation. In order to determine the output of a node $u$, if $\fail_u$ is set to $1$, then the output is taken from the output of the ambassador protocol, and otherwise, the output is $c_u$.
	
	\paragraph{Complexity.} Each node participates in $O(\log n)$ interactions per phase, and the number of phases is $O(\log (n/d_0)) = O(\log n)$. Thus, if the algorithm does not set a $\fail$ flag to $1$, which happens w.h.p., then the time cost of the algorithm is $O(\log ^2n)$. Moreover, due to the ambassador algorithm running in the background, the algorithm always reaches a valid output configuration, even when some node $u$ sets $\fail_u\leftarrow 1$,  and the expected time cost is also $O(\log^2 n)$.
	
	\section{An \texorpdfstring{\textless$O(\log ^{5/3}n),O(\log n)$\textgreater}{} Protocol}\label{section:five-thirds}
	In the $2$-protocol, the $O(\log^2n)$ time bound is due to each node participating in $O(\log n)$ phases, and having $O(\log n)$ interactions in each phase.
	Berenbrink et al.~\cite{BEFKKR2018} designed a protocol that builds upon the $2$-protocol, and runs in $O(\log ^{5/3} n)$ time.
	To do so, they reduce the number of interactions in each phase to be $O(\log ^{1-a}n)$ for a specific constant $0<a<1$, to be chosen later.
	However, reducing the number of interactions in a phase affects the probability guarantee of Lemma~\ref{lem:classic-cancel}.
	In particular, it is no longer guaranteed that, w.h.p., after a non-critical phase is done, then all of the colored nodes that enter a splitting stage manage to participate in a split.
	Thus, Berenbrink et al.~\cite{BEFKKR2018} designed a mechanism that addresses the missing splits: the sequence of phases is partitioned into $\log^{1-a}n$ \emph{epochs} where each epoch is a sequence of $\log^an$ phases, and at the end of each epoch the protocol enters a special \emph{catch-up phase}. Intuitively, the role of the catch-up phase is to allow for colored nodes that did not split in some phase of the epoch to catch up on their missing splits.
	Finally, phases no longer have buffers,
	and the mechanism of \emph{pulling} nodes from one phase to a next is no longer used, and instead, Berenbrink et al.~\cite{BEFKKR2018} use the same mechanism for pulling nodes from a catch-up phase of one epoch to the next epoch.

	A node $u$ is said to be \emph{successful} in a splitting stage if either $u$ enters the stage without a color, or $u$ participates in a split during the splitting stage.
	Throughout an epoch, a node $u$ is said to be \emph{synchronized} after a splitting stage if so far $u$ was successful in all of the splitting stages of the epoch.
	If $u$ is not synchronized (that is, $u$ entered some splitting stage with a color and did not participate in a split), then $u$ is said to be \emph{out-of-sync}.
	Each node $u$ stores a flag $\sync_u$, initially set to $1$, indicating whether a node is synchronized or not.
	Throughout the execution of the protocol, if $\sync_u=0$, which means that $u$ is out-of-sync, then the protocol ignores the parts of the transition function defined in the $2$-protocol during interactions of $u$ with other nodes, except for updating the counters $\alpha_u$ and $\beta_u$, and participating in the broadcasts.
	Each node $u$ also stores a value $\phi_u$ so that if $u$ is out-of-sync, then $\phi_u$ is the phase number within the current epoch in which $u$ went out-of-sync.
	Thus, an out-of-sync node $u$ did not split in $\log^an-\phi_u$ splitting stages within the current epoch, and so, intuitively, unless this situation is directly addressed, at the end of the epoch there are $val(u) =  2^{\log^an-\phi_u}$ uncolored nodes that would have been colored with $c_u$ in the $2$-protocol.
	Notice, however, that if the epoch contains a phase that in the $2$-protocol would have been a critical phase, then this intuition is not necessarily true, since the $2$-protocol in this case may not even reach the end of the epoch.

	In order to catch up on coloring the $2^{\log^an-\phi_u}$ nodes with potentially missing colors,
	whenever an out-of-sync node $u$ interacts with an uncolored node $v$ and $\phi_u <\log^a n$, then $u$ and $v$ participate in a special split: Let  $x$ be $\phi_u$ prior to the interaction. Then $u$ sets $\phi_u \leftarrow x +1$, while $v$ sets $c_v\leftarrow c_u$, $\phi_v \leftarrow x+1$, and $\sync_v \leftarrow 0$. Notice that, after the split, $val(u)+val(v) = 2\cdot 2^{\log^a n - (x+1)} = 2^{\log^a n - x}$ which is exactly $val(u)$ prior to the interaction. Thus, intuitively, a split operation splits the value of u evenly between u and v.
	Once $u$ reaches the end of a catch-up phase, if $\phi_u = \log^a n$, then $u$ sets $\sync_u \leftarrow 1$, and is now synchronized again.
	The role of the catch-up phase is to guarantee that at the end of the  epoch, w.h.p., all of the out-of-sync nodes become synchronized again since they have enough interactions to continue splitting.

	Finally, when an out-of-sync node $u$ leaves a catch-up phase of the $j$th epoch while still being out-of-sync, then the following lemma relates $j$ to the critical phase.
	
	\begin{lemma}\label{lem:critical-epoch}[Rephrased from \cite{BEFKKR2018}]
		Suppose that the first time that there exists some node $u$ that finishes an epoch while being out-of-sync happens for epoch $j$. If $a\le \frac 13$, then, w.h.p., there exists a positive integer $k$ where $(j-2)\log^{a}n< k\leq j\log^{a}n$ such that the $k$th phase is a critical phase.
	\end{lemma}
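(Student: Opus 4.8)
The plan is to locate the first \emph{critical phase}—the first phase $k^{*}$ in which $|w-b|\ge n/3$—and to show that the only epochs in which some node can finish while out-of-sync are the epoch that contains $k^{*}$ and at most the one immediately after it; taking $k=k^{*}$ then gives the bracket claimed by the lemma. (If $d_{0}\ge n/3$ a critical phase occurs at the very start and the statement is immediate, so assume $d_{0}<n/3$; then, as in the $2$-protocol, $d$ would double each non-critical phase, so a critical phase is reached within $O(\log n)$ phases.) Let $i$ be the index of the epoch containing $k^{*}$, so $(i-1)\log^{a}n<k^{*}\le i\log^{a}n$.

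First I would prove the lower bound $j\ge i$ by induction on the epoch index: while an epoch consists only of non-critical phases, and every earlier epoch ended with all nodes synchronized, w.h.p.\ that epoch also ends with all nodes synchronized. Since epochs $1,\dots,i-1$ consist only of non-critical phases, this shows no node finishes any of them out-of-sync, hence $j\ge i$. The engine of the inductive step is the structural guarantee (Lemma~\ref{lem:classic-cancel} together with its $5/3$-protocol refinement) that a non-critical phase leaves $\Omega(n)$ empty-colored nodes. Given this, during the catch-up phase of the epoch every out-of-sync node $u$ needs at most $\log^{a}n$ further special splits to raise $\phi_{u}$ to $\log^{a}n$, and each such split only requires $u$ to meet \emph{some} empty node; since a special split turns exactly one empty node into a new out-of-sync node while conserving $\sum_{u}val(u)$, at most $o(n)$ empty nodes are ever consumed, so the empty pool stays $\Omega(n)$ throughout, and a Chernoff bound over the $\Theta(\log n)$ interactions of the catch-up phase shows $u$ completes its splits w.h.p.; a union bound over the at most $n$ out-of-sync nodes closes the step. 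This is exactly where $a\le 1/3$ enters: it is the threshold that keeps the accumulated deviation of the per-node interaction counters $\alpha_{u},\beta_{u}$ over the $\log^{a}n$ phases of an epoch—and, using the re-synchronization built into each catch-up phase, over the whole run—small enough that no node ever sets $\fail$ and the counting above remains valid.

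For the upper bound $j\le i+1$, I would argue directly from the abundance of colored nodes after the critical phase. Once $d\ge n/3$, within $O(1)$ further phases a cancellation stage leaves too few empty nodes for every colored node to split in the following splitting stage, so $\Omega(n)$ colored nodes go out-of-sync; and that same scarcity of empty nodes prevents the subsequent catch-up phase from re-synchronizing all of them, so some node finishes its epoch out-of-sync. Because $O(1)\ll\log^{a}n$, the phase in question, and hence the failing catch-up phase, lies in epoch $i$ or epoch $i+1$; thus $j\le i+1$. Combining, $i\in\{j-1,j\}$, and since $(i-1)\log^{a}n<k^{*}\le i\log^{a}n$, using $i\ge j-1$ on the left and $i\le j$ on the right gives $(j-2)\log^{a}n<k^{*}\le j\log^{a}n$, so $k=k^{*}$ witnesses the lemma.

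The main obstacle is the inductive step behind $j\ge i$—showing that the catch-up mechanism w.h.p.\ restores synchronization in every non-critical epoch. This demands, simultaneously: bounding the number and the $val(\cdot)$-weights of the out-of-sync nodes accumulated over an epoch of $\log^{a}n$ shortened phases; controlling the deviations of the interaction counters so nodes stay close enough in phase number never to trigger $\fail$; and analysing the epidemic-like catch-up dynamics against an empty pool that shrinks but remains $\Omega(n)$—all with failure probability $n^{-\Omega(1)}$ after a union bound over up to $n$ nodes. Reconciling the $\Theta(\log n)$ length of the catch-up phase with these estimates is precisely what forces $a\le 1/3$; by contrast, the upper-bound direction is essentially an elaboration of the known post-critical behaviour of the $2$-protocol.
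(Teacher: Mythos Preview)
The paper does not itself prove this lemma---it is quoted from Berenbrink et al.~\cite{BEFKKR2018}---but Section~3.1 (``Barrier for Increasing $a$'') lays out the quantitative mechanism, and that is the right thing to compare your sketch against.

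Your two-sided bracket is the correct skeleton: show $j\ge i$ by proving that the catch-up phase re-synchronizes everyone in every non-critical epoch, and show $j\le i+1$ by arguing that once $d\ge n/3$ the supply of empty nodes dries up within $O(1)$ phases. The upper-bound direction is fine.

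The gap is in where and why you place the hypothesis $a\le 1/3$. You attribute it to keeping counter deviations small enough ``that no node ever sets $\fail$''. That is not the bottleneck (and the $\fail$ mechanism of the $2$-protocol is not what is at stake in the $5/3$-protocol). The paper's Section~3.1 is explicit: without power-of-two choices, the probability that a given counter deviates by $\Omega(\log^{1-a}n)$ from the mean is only $2^{-c'\log^{1-2a}n}$---note the exponent $1-2a$, not $1-a$. This ``deviation reason'' is one of the two ways a colored node can fail to split in a stage, and it feeds straight into the bound
\[
U_j \;\le\; \frac{2^{\log^{a}n}\,n\log^{a}n}{2^{\hat c\log^{1-a}n}}\;+\;\frac{2^{\log^{a}n}\,n\log^{a}n}{2^{\hat c'\log^{1-2a}n}}.
\]
Your inductive step needs $U_j=o(n)$; indeed, this is exactly the ``at most $o(n)$ empty nodes are ever consumed'' assertion that you invoke without justification. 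Forcing the second term above to be $o(n)$ requires $\log^{a}n<\hat c'\log^{1-2a}n$, i.e.\ $a\le 1/3$. So the location you gesture at (``controlling the deviations'') is right, but the stated reason (avoiding $\fail$) is wrong, and the missing arithmetic---deriving the $2^{-\Theta(\log^{1-2a}n)}$ deviation probability and pushing it through to $U_j=o(n)$---is precisely the content of the lemma.
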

	
	Thus, once an out-of-sync node $u$ leaves the $j$th epoch, node $u$ begins a special broadcast which initiates the $2$-protocol, but starting with the colors that were stored at nodes at the beginning of the $(j-1)$th epoch, which is the beginning of phase number $(j-2)\log^{a}n+1$.
	In order for this protocol to run, the nodes store their colors from the beginning of the last two epochs.
	Since $k-(j-2)\log^{a}n \le 2\log^an$, $d_k > n/3$ and, w.h.p., for all $i\le k$ we have $d_{i+1} = 2d_i$, then, w.h.p., $$n/3 < d_k = 2^{k-(j-2)\log^{a}n} d_{(j-2)\log^{a}n }  \le 2^{2\log^an}d_{(j-2)\log^{a}n },$$
	and so $d_{(j-2)\log^an} > \frac{n}{3\cdot 2^{2\log^an}}$.
	Thus, the $2$-protocol starting at phase $(j-2)\log^an$, runs for an additional $O(\log \frac{n}{d_{(j-2)\log^an}})= O(\log^an)$ phases, with a parallel time of $O(\log^{1+a}n)$. Notice that in order to reduce the number of states, when a node $u$ moves to the $2$-protocol, the protocol reuses the bits that were used for the epoch based protocol.

	For the total time complexity, there are at most $O(\log n)$ phases until a node $u$ ends an epoch while being out-of-sync, and in each such phase every node has $O(\log^{1-a} n)$ interactions, for a total of $O(\log^{2-a}n)$ time.
	In addition, there are $\log^{1-a}n$ epochs, and each epoch has a catch-up phase that runs for another $O(\log n)$ interactions, for a total of $O(\log^{2-a}n)$ time.
	Finally, executing the $2$-protocol after some node finishes an epoch while being out-of-sync costs another $O(\log^{1+a}n)$ time.
	Thus, the total time is $O(\log^{2-a}n + \log^{1+a}n)$.
	This runtime, is minimized when $a=\frac 12$.
	Unfortunately, due to Lemma~\ref{lem:critical-epoch}, the protocol requires $a\le \frac 13$, and so the runtime is minimized when $a=\frac 13$ and becomes $O(\log ^{5/3} n)$.

	\subsection{Barrier for Increasing \texorpdfstring{$a$}{}}
	Our new protocol, described in Section~\ref{sec:ourprotcol}, focuses on adapting the algorithm of Berenbrink et al.~\cite{BEFKKR2018} in a way that allows to increase $a$ to $\frac 12$.
	In order to explain the design choices behind our new protocol, we first provide an intuitive explanation as to what are the challenges in making Lemma~\ref{lem:critical-epoch} work for $a>\frac 13$.
	
	In general, there are two reasons for why there may exist a node that is supposed to participate in a split during a particular phase but that node does not manage to do so.
	The first reason is that every node has only $O(\log^{1-a} n)$ interactions in a phase (compared to the $O(\log n)$ interactions per phase in the $2$-protocol). We call this reason the \emph{short-phase} reason.
	Berenbrink et al.~\cite{BEFKKR2018} proved that, w.h.p., although the phases are shorter, a colored node $u$ during a splitting stage has a constant probability of interacting with a non-colored node.
	Thus, the probability that in a given phase a particular colored node $u$ does not split is at most $\frac 1 {2^{c\log^{1-a} n}}$ for some constant $c>1$.
	
	The second reason is that some nodes may have participated in a number of interactions that is far from the average number of interactions.
	In particular, nodes in a splitting stage whose counters are not concentrated near the average counter are likely to have a \emph{small} number of interactions with other nodes that are also in the splitting stage.
	We call this reason the \emph{deviation} reason.
	Berenbrink et al.~\cite{BEFKKR2018} prove that the probability of a counter being at distance at least $\Omega(\log^{1-a} n)$ from the average counter value is at most $\frac{1}{2^{c'\log^{1-2a}n}}$, for some constant $c'>1$.
	
	Since in the worst-case a node can become out-of-sync during the first phase of an epoch, and there are $\log^an$ phases in an epoch, an out-of-sync node $u$ in the beginning of the catch-up phase necessitates at most $2^{\log^a n}$ splits in order to guarantee that there are no out-of-sync nodes that originated from $u$ at the end of the catch-up phase.
	Moreover, since the probability of a node becoming out-of-sync during an epoch is at most $\frac{\log^an}{2^{c\log^{1-a}n}}+\frac{\log^an}{2^{c'\log^{1-2a}n}}$, then, w.h.p., by applying a Chernoff bound,  the number of out-of-sync nodes at the beginning of a catch-up phase is $\frac{n\log^an}{2^{\hat c\log^{1-a}n}}+\frac{n\log^an}{2^{\hat c'\log^{1-2a}n}}$, for some constants $\hat c, \hat c ' >1$.
	For the $j$th epoch, let $U_j$ denote the total number of splits needed to guarantee that there are no out-of-sync nodes at the end of the catch-up phase of epoch $j$. Notice that $U_j =\sum_{\text{out-of-sync }u} val(u)$.
	Thus, w.h.p., $U_j \le \frac{2^{\log^an}n\log^an}{2^{\hat c\log^{1-a}n}}+\frac{2^{\log^an}n\log^an}{2^{\hat c'\log^{1-2a}n}}$.
	Berenbrink et al.~\cite{BEFKKR2018} proved that, w.h.p., the number of uncolored nodes at the beginning of a catch-up phase is $\Omega(n)$, and so if $U_j = o(n)$ then, w.h.p., the number of splits that take place in order to complete the missing splits until the end of the catch-up phase is $U_j$.

	In order for the first term in the upper bound of $U_j$ to be $o(n)$, we set $a\le \frac 12$. However, In order for the second term in the upper bound of $U_j$ to be $o(n)$, we set $a\le \frac 13$. Thus, the second term in the upper bound on $U_j$, which is due to the second reason for having out-of-sync nodes, is the barrier for making the $5/3$-protocol to run in $O(\log^{3/2} n)$ time.
	
	\section{A \texorpdfstring{\textless$O(\log^{3/2}n),O(\log^2n)$\textgreater\space}{} Protocol}\label{sec:ourprotcol}
	\paragraph{Power of two choices.}
	In order to reduce the number of missing splits due to deviations, we change the $5/3$-protocol as follows: when two nodes interact, instead of both nodes incrementing their counters, the nodes implement a variation of the \emph{power-of-two choices} strategy which increments the counter only for the node with the smaller counter.
	However, since we assume that the interactions are symmetric, it is not clear what to do in case of a tie.
	For now, we assume that the interactions are asymmetric, and so in the case of a tie, the protocol is able to choose just one of the nodes to increment its counter.
	In Section~\ref{section:lognstates} we explain how to remove the requirement of asymmetric interactions.
	
	The motivation for using a power-of-two choices strategy is that this strategy has the property that counters do not deviate too much from the average counter. This property is summarized in the following invariant and theorem, from Berenbrink et al.~\cite{BCSV2006}, which we rephrase here to be expressed in our terms.
	
	\begin{invariant}\label{inv:concentration}
		Let $\alpha_i$ be the fraction of counters whose value is at least $i$  less than the average counter value.
		If $m$ interactions that increment counters have taken place so far, then, there is a constant $c_1$ such that for $1\leq i\leq c_1\log n$,
		$\alpha_i \leq 1.3 \cdot 2.8^{-i}$, w.h.p. Moreover, there is a constant $c_2$ such that for $i\geq c_2\log n$, $\alpha_i = 0$, w.h.p.
	\end{invariant}
	
	\begin{theorem}\label{theorem:powerof2}
		If $m$ interactions that increment counters have taken place so far, then, w.h.p., the value of the maximum counter  is $\frac{m}{n} + O(\log \log n)$.
	\end{theorem}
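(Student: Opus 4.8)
The plan is to follow the balanced-allocations analysis of \cite{BCSV2006}, using Invariant~\ref{inv:concentration} to supply a ``clean'' configuration from which to start a layered induction for the upper tail. First I would note that each counter-incrementing interaction increments exactly one counter, so after $m$ such interactions the $n$ counters sum to $m$ and their average is exactly $m/n$; it therefore suffices to bound the largest counter by $m/n + O(\log\log n)$, i.e.\ to bound the gap between the maximum and the average.

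Next I would extract a base layer from Invariant~\ref{inv:concentration}. Since $\alpha_i$ is non-increasing in $i$ and $\alpha_i = 0$ for $i \ge c_2\log n$, the total \emph{deficit} $\sum_{u\,:\,\mathrm{val}(u) < m/n}(m/n - \mathrm{val}(u)) = n\sum_{i\ge 1}\alpha_i$ is, w.h.p., at most $n\sum_{i\ge1} 1.3\cdot 2.8^{-i} + o(n) = O(n)$. Since the counters sum to $m$, the total \emph{excess} $\sum_{u\,:\,\mathrm{val}(u) > m/n}(\mathrm{val}(u) - m/n)$ equals the deficit, and is hence also $O(n)$; in particular, for a large enough constant $k_0$, at most $n/4$ counters exceed the average by $k_0$ or more. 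This serves as the base case of the induction.

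The main step is the layered (witness-tree) induction for the upper tail. For $k\ge 0$ let $\gamma_k(t)$ be the number of counters that exceed the running average by at least $k$ after $t$ interactions. The structural observation is that, since a counter-incrementing interaction increments only the \emph{smaller} of the two chosen counters (ties going \lefty), a counter can rise to level $k+1$ only if \emph{both} chosen nodes were already at level $k$ or above; hence, conditioned on the history, a single interaction increases $\gamma_{k+1}$ with probability at most $\gamma_k(t)^2/n^2$ (the $L$/$R$ split together with the \lefty\ tie-break only makes this smaller, which is exactly the regime analyzed in \cite{BCSV2006}). Feeding $\gamma_{k_0}\le n/4$ into this recursion and controlling the resulting random variables with Chernoff/Azuma bounds, the layer sizes decay doubly exponentially, $\gamma_{k_0+j}\le n/2^{\Omega(2^{j})}$, so $\gamma_{k_0+j} < 1$, and hence $\gamma_{k_0+j} = 0$, once $j \ge \log\log n + O(1)$. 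This yields $\max_u \mathrm{val}(u) \le m/n + k_0 + \log\log n + O(1) = m/n + O(\log\log n)$, w.h.p.

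The part that needs real care — and the main obstacle — is the ``heavily loaded'' aspect: the average $m/n$ drifts as interactions accumulate, so one cannot naively run a single layered induction over all $m$ interactions, since $m$ may be superpolynomial. I would handle this as in \cite{BCSV2006}, by cutting the execution into blocks of $\Theta(n\,\mathrm{polylog}\,n)$ interactions so that within each block the average moves by only $O(\log\log n)$ (absorbed into constants), using Invariant~\ref{inv:concentration} together with the counting argument above to re-establish the base layer $\gamma_{k_0}\le n/4$ at the start of the relevant block, and then running the layered induction only across that block; bounding the counter values at the block's end relative to the block's average, and keeping the accumulated failure probability at $n^{-\Omega(1)}$ throughout, is where the estimates are delicate. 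Everything else is the standard balanced-allocations argument, so alternatively one may simply invoke \cite{BCSV2006} directly, of which this statement is a rephrasing.
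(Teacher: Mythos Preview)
The paper does not give its own proof of this theorem; it introduces the statement as a rephrasing of a result from \cite{BCSV2006}, and in Section~\ref{section:lognstates} says explicitly that ``\cite{BCSV2006} proved that Theorem~\ref{theorem:powerof2} holds for $\lefty$.'' Your final sentence---that one may simply invoke \cite{BCSV2006} directly, of which this statement is a rephrasing---is therefore exactly what the paper does.

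Your sketch goes further than the paper by outlining the actual argument: the deficit/excess counting from Invariant~\ref{inv:concentration} to get a base layer with at most $n/4$ counters above $m/n + k_0$, the key structural fact that under power-of-two-choices a counter can reach level $k+1$ only if both participants were already at level $\ge k$ (giving probability at most $(\gamma_k/n)^2$ per interaction), the resulting doubly-exponential decay of layer sizes, and the block-wise handling of the drifting average in the heavily-loaded regime. This is a faithful summary of the \cite{BCSV2006} approach and is correct as a proof plan; the paper simply chose to cite rather than reproduce it.
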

	
	By~Invariant~\ref{inv:concentration}, for $i=\hat c\log^{1-a}n$  the number of nodes whose counters are at least $i$ away from the average counter is, w.h.p.,  at most $\frac n {2^i} = \frac n {2^{\hat c\log^{1-a}n}}$.
	Thus, the probability of a node becoming out-of-sync during an epoch is at most $\frac{\log^an}{2^{c\log^{1-a}n}} + \frac{\log^an}{2^{\hat c\log^{1-a}n}}$, and so, w.h.p., the number of out-of-sync nodes at the beginning of a catch-up phase is $\frac{n\log^an}{2^{c''\log^{1-a}n}} + \frac{n\log^an}{2^{c'''\log^{1-a}n}}$, for some constants $c'',c''' > 1$.
	Moreover, w.h.p., now $U_j \le \frac{2^{\log^{a}n}n\log^an}{2^{c''\log^{1-a}n}} + \frac{2^{\log^{a}n}n\log^an}{2^{c'''\log^{1-a}n}}$, and in order for $U_j=o(n)$ we can choose $a \le 1/2$.

	Finally, the runtime of the algorithm is still $O(\log^{2-a}n + \log^{1+a}n)$, however, now we can choose $a=1/2$ to get a total runtime of $O(\log^{3/2} n)$.

	\subsection{Simpler Proofs}
	As a side effect of using the power-of-two choices strategy, we are able to provide simpler proofs for correctness. The simplifications are a byproduct of the maximum value of a counter being close to the average value.
	
	We define the \emph{beginning} of a stage to be the interaction at which $ n - n/2^{\Theta(\log^{1-a} n)}$ nodes have already entered the stage, and the \emph{end} of the stage is when the first node leaves the stage. Similar definitions are made for beginning and end of a phase.
	Recall that every node has  $O(\log^{1-a}n)$ interactions in each stage.
	
	\begin{lemma}\label{lemma:oursync}
		At the end of a stage, w.h.p., at least $n-n/2^{\Omega(\log^{1-a} n)}$ of the nodes have had $\Omega(\log^{1-a} n)$ interactions between the beginning and the end of the stage.
	\end{lemma}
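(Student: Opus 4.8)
The plan is to fix a stage and track, for a single node $u$, how many interactions $u$ participates in during the time window from the beginning to the end of the stage, and then argue that this count is $\Omega(\log^{1-a}n)$ for all but a $1/2^{\Omega(\log^{1-a}n)}$ fraction of the nodes. First I would use the definition of ``beginning of a stage'' (the interaction at which $n - n/2^{\Theta(\log^{1-a}n)}$ nodes have entered the stage) together with Theorem~\ref{theorem:powerof2} and Invariant~\ref{inv:concentration}: since the power-of-two choices strategy keeps every counter within $O(\log\log n)$ of the average, and since each stage spans $\Theta(\log^{1-a}n)$ counter increments on average, the first node leaves the stage (the ``end'') only after the average counter has advanced by $\Theta(\log^{1-a}n)$ beyond its value at the beginning; in particular the end of the stage occurs $\Theta(n\log^{1-a}n)$ interactions after the beginning. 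So the time window under consideration has length $m^\star = \Theta(n\log^{1-a}n)$ interactions.

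Next I would bound, for a fixed node $u$, the probability that $u$ participates in fewer than $\delta\log^{1-a}n$ of those $m^\star$ interactions, for a suitably small constant $\delta$. In each of the $m^\star$ interactions, $u$ is one of the two chosen endpoints with probability exactly $2/n$, independently across interactions (the scheduler picks a uniformly random pair each step). Hence the number $X_u$ of interactions involving $u$ in the window is a binomial random variable with mean $2m^\star/n = \Theta(\log^{1-a}n)$. A lower-tail Chernoff bound gives $\Pr[X_u < \delta\log^{1-a}n] \le 2^{-\Omega(\log^{1-a}n)}$ once $\delta$ is a small enough constant relative to the hidden constant in $m^\star$. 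Let $B$ be the set of ``bad'' nodes with $X_u < \delta\log^{1-a}n$; then $\mathbb{E}[|B|] \le n/2^{\Omega(\log^{1-a}n)}$, and a Chernoff bound on $|B|$ (the events $\{u\in B\}$ are negatively associated, or one can simply apply Markov/Chernoff to the sum) shows $|B| \le n/2^{\Omega(\log^{1-a}n)}$ w.h.p. Taking a union bound over the at most $O(\log n)$ stages in a phase and $O(\log n)$ phases costs only an extra polylogarithmic factor and is absorbed into the w.h.p. guarantee. This yields exactly the claimed conclusion: at least $n - n/2^{\Omega(\log^{1-a}n)}$ nodes have $\Omega(\log^{1-a}n)$ interactions in the stage.

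The one subtlety — and the step I expect to require the most care — is pinning down the length of the time window $m^\star$ rigorously, i.e.\ showing that the ``end'' of the stage is neither too early nor too late relative to its ``beginning''. The upper bound (the end is at most $O(n\log^{1-a}n)$ interactions after the beginning) follows because the slowest node's counter must advance by $\Theta(\log^{1-a}n)$ and each interaction advances at most one counter, so after $O(n\log^{1-a}n)$ interactions the relevant counters have provably advanced enough — but this direction needs Invariant~\ref{inv:concentration} to rule out the pathological case that the ``beginning'' node has a wildly lagging counter. The lower bound (the end is at least $\Omega(n\log^{1-a}n)$ interactions after the beginning, so that the window is long enough to contain $\Omega(\log^{1-a}n)$ interactions of $u$) follows from Theorem~\ref{theorem:powerof2}: the maximum counter exceeds the average by only $O(\log\log n) = o(\log^{1-a}n)$, so no node can race to the end of the stage before the average counter has moved by $\Theta(\log^{1-a}n) - O(\log\log n) = \Theta(\log^{1-a}n)$ increments, which takes $\Theta(n\log^{1-a}n)$ interactions. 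Once the window length is established as $\Theta(n\log^{1-a}n)$, the remaining binomial concentration argument is routine.
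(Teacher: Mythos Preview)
Your approach is correct in outline but takes a genuinely different route from the paper. The paper's proof is a two-sentence argument that stays entirely at the level of counter values: at the instant the first node exits the stage, Theorem~\ref{theorem:powerof2} places the average counter within $O(\log\log n)$ of the exit threshold, and Invariant~\ref{inv:concentration} then says at most $1.3\cdot 2.8^{-\Theta(\log^{1-a}n)}n \le n/2^{\Omega(\log^{1-a}n)}$ nodes have counters more than $\Omega(\log^{1-a}n)$ below that average. Since each counter increment of a node witnesses an interaction of that node, this immediately gives the bound on interactions; there is no separate window-length computation and no binomial tail estimate. Your argument is more elementary and makes the scheduler's randomness explicit, but it ends up invoking exactly the same two tools (Theorem~\ref{theorem:powerof2} and Invariant~\ref{inv:concentration}) just to pin down $m^\star$, so it is strictly longer without gaining generality.

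One small gap to fix: for the \emph{lower} bound on $m^\star$ you appeal only to Theorem~\ref{theorem:powerof2}, arguing that the maximum cannot reach the exit threshold before the average has ``moved by $\Theta(\log^{1-a}n)$.'' That would be immediate if the window started when the \emph{first} node entered the stage (then $\mathrm{avg}\le\max=T_1$ at the start), but the paper's ``beginning'' is the later time $t_b$ when all but a $2^{-\Theta(\log^{1-a}n)}$ fraction have entered, and Theorem~\ref{theorem:powerof2} alone does not bound $\mathrm{avg}(t_b)$ from above. You need Invariant~\ref{inv:concentration} here too: at $t_b$ a $2^{-\Theta(\log^{1-a}n)}$ fraction of counters are still below $T_1$, and feeding this into the invariant forces $\mathrm{avg}(t_b)-T_1$ to be at most a constant fraction of the stage length (with the $\Theta$-constant in the definition of ``beginning'' chosen appropriately). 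With that patch, your window-length estimate and the subsequent binomial Chernoff step go through. A second, more minor point: the indicators $\{u\in B\}$ are not obviously negatively associated in this model (each interaction hits a random \emph{pair}), so if you go this route you should either justify the negative-association claim or simply use Markov's inequality on $|B|$ with a slightly looser exponent, which still yields $|B|\le n/2^{\Omega(\log^{1-a}n)}$ with the required probability.
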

	
	\begin{proof}
		When the first node exits the stage, by Theorem~\ref{theorem:powerof2}, w.h.p.,  the maximum counter value is at most $O(\log \log n)$ away from the average counter value, and, by Invariant~\ref{inv:concentration}, w.h.p., there are at most $1.3 \cdot 2.8^{-\Theta(\log^{1-a} n)}n \leq n/2^{\Omega(\log^{1-a} n)}$ nodes whose counter values are at least $\Omega(\log^{1-a} n)$ away from the average counter value.
		The lemma follows.
	\end{proof}
	
	at most $i\cdot\frac{ n}{2^{C\sqrt{\log n}}}$ out-of-sync nodes), w.h.p.
	
	The following two  lemmas are the basis for the inductive structure of the correctness. Lemma~\ref{lemma:ourcancel} states that, as long as the behaviour so far is as expected, then at the end of a cancellation stage there are many nodes with an empty color. Lemma~\ref{lemma:oursplit} states that, as long as the behaviour so far is as expected, then the number of nodes that become out-of-sync due to the splitting phase is small. We then conclude, in Corollary~\ref{cor:ourinductivephase} and Corollary~\ref{conclusion:catch-upphasebound}, that, as long as the behaviour so far is as expected, after each phase, the number of out-of-sync nodes is $o(n)$.
	Thus, the number of out-of-sync nodes entering a catch-up phase is $o(n)$, and in Lemma~\ref{lemma:suffdouble}, we use this assumption to show that, unless an epoch has a critical phase, there are no more out-of-sync nodes at the end of the epoch.
	
	\begin{lemma}\label{lemma:ourcancel}
		For a phase $i$, suppose that just before the phase begins, there are at most $o(n)$ out-of-sync nodes. If $i$ happens before the critical phase, then, w.h.p., after the end of the cancellation stage in phase $i$ at least $6n/10$ nodes have an empty color.
	\end{lemma}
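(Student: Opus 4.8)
The plan is to adapt the cancellation analysis underlying Lemma~\ref{lem:classic-cancel} to the shorter stages of the $3/2$-protocol and to the presence of $o(n)$ out-of-sync nodes, replacing the classic hitting-time argument by a short drift argument on the number of colored nodes. During a cancellation stage the only color-changing rule that fires between two synchronized nodes is cancellation, so the number of empty nodes is non-decreasing; in particular, if at the beginning of the stage at most $2n/5$ nodes are colored, the claim is immediate, so I may assume more than $2n/5$ colored nodes at the beginning. Let $W_t$ and $B_t$ be the numbers of white and black \emph{synchronized} nodes after $t$ interactions of the stage and $C_t = W_t + B_t$. Out-of-sync nodes never cancel, so they stay frozen, and since the hypothesis bounds their number by $o(n)$ they enter the analysis only through $o(n)$ error terms; the same holds for the $n/2^{\Omega(\log^{1-a}n)}$ nodes that, at the beginning of the stage, have not yet entered it (this uses the definition of the beginning of a stage).

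First I would establish a constant downward drift for $C_t$ as long as $C_t > \theta$, where $\theta := (2/5 - c)n$ for a suitable small constant $c > 0$. Since phase $i$ precedes the critical phase, the gap between the two colors entering the phase is below $n/3$, and cancellation preserves $W_t - B_t$ among synchronized nodes; hence, whenever $C_t > \theta$ we get $\min(W_t,B_t) \ge (\theta - n/3)/2 - o(n) = \Omega(n)$ (for $c$ small enough, since $2/5 > 1/3$) and $\max(W_t,B_t) \ge \theta/2 = \Omega(n)$, so $W_t B_t = \Omega(n^2)$. A uniformly random interaction then cancels an opposite-colored pair of synchronized nodes that are both currently in this cancellation stage with probability $\Omega(W_t B_t / n^2) - o(1) = \Omega(1)$, where I invoke Lemma~\ref{lemma:oursync} (and the definition of the beginning of a stage) to say that all but $o(n)$ synchronized nodes are simultaneously in this stage during the relevant window. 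This gives $\mathbb{E}[C_{t+1} - C_t \mid \mathcal{F}_t] \le -\delta$ for a constant $\delta > 0$ as long as $C_t > \theta$ --- recoloring of empty nodes by the $o(n)$ stragglers still in the previous splitting stage and by out-of-sync special splits contributes only an $o(n)$ positive correction over the whole stage --- while $|C_{t+1} - C_t| \le 2$ always.

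Next comes a concentration step. By Theorem~\ref{theorem:powerof2}, together with the fact that a node needs $\Theta(\log^{1-a}n) \gg \log\log n$ counter increments before it leaves the stage, the cancellation stage lasts $T = \Omega(n\log^{1-a}n)$ interactions w.h.p. Stopping the process the first time $C_t$ reaches $\theta$, the process $C_t + \delta t$ is a supermartingale with $O(1)$-bounded increments, so an Azuma bound over $T$ steps shows that, on the event that $C_t$ stays above $\theta$ for the whole stage, w.h.p. $C_T$ would have dropped below $C_0 - \delta T/2 < 0$ --- a contradiction. Hence, w.h.p., $C_t$ reaches $\theta$ at some point of the stage; since $C_t$ can increase by only $o(n)$ in total thereafter, the number of synchronized colored nodes at the end of the stage is at most $\theta + o(n) < 2n/5$, and adding back the $o(n)$ out-of-sync colored nodes still leaves fewer than $2n/5$ colored nodes, so at least $3n/5 \ge 6n/10$ nodes are empty at the end of the stage.

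I expect the bookkeeping around the $o(n)$ ``irregular'' nodes --- out-of-sync, not yet in the stage, or already carried out of it by a fast counter --- to be the main obstacle, since the whole argument rests on their not eroding the constant drift $\delta$; this is precisely where Lemma~\ref{lemma:oursync} and Theorem~\ref{theorem:powerof2} are used, to guarantee that all but $n/2^{\Omega(\log^{1-a}n)} = o(n)$ nodes are active in this cancellation stage for $\Omega(\log^{1-a}n)$ consecutive interactions. A secondary point to pin down is that ``phase $i$ precedes the critical phase'' really does force the color gap entering phase $i$ below $n/3$ in the presence of the epoch and catch-up machinery; this should be part of the inductive invariant --- maintained by this lemma together with the companion splitting lemma and the subsequent per-phase corollaries --- that the $3/2$-protocol shadows the $2$-protocol up to $o(n)$ discrepancies until the critical phase.
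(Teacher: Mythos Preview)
Your argument is correct but organizes the analysis differently from the paper. The paper partitions the $\Theta(n\log^{1-a}n)$ interactions of the cancellation stage into $\Theta(\log^{1-a}n)$ \emph{periods} of $n$ interactions each and tracks $z_i$, the number of synchronized nodes carrying the minority color (with respect to the synchronized set) at the start of period $i$; a Chernoff bound inside each period yields a multiplicative contraction $z_{i+1}\le \tfrac{41}{42}z_i$ w.h.p.\ as long as $z_i>n/40$, so after a constant number of periods $z_i<n/40$, and then the colored total is bounded by $n/40+(n/3+n/40)+o(n)<4n/10$. Your route instead tracks the total colored count $C_t$ directly, exhibits a constant negative drift while $C_t>\theta=(2/5-c)n$ using $|W_t-B_t|<n/3+o(n)$, and closes with a stopped-supermartingale/Azuma argument over the $\Omega(n\log^{1-a}n)$ interactions of the stage. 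Both approaches lean on the same ingredients --- Lemma~\ref{lemma:oursync} and Theorem~\ref{theorem:powerof2} for the stage length and the $o(n)$ bound on stragglers, and the $o(n)$ hypothesis on out-of-sync nodes --- and both arrive at the same $4n/10$ cap on colored nodes. Your drift/Azuma packaging is a bit more direct (it targets the quantity in the statement rather than going through the minority count), while the paper's period decomposition makes the constants slightly more explicit; neither is materially stronger.
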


	\begin{proof}
		By Lemma~\ref{lemma:oursync}, w.h.p. there are $\Theta(n\log^{1-a} n)$ interactions between the beginning of the cancelling stage and its end. Moreover, there are at least $n - o(n)$ nodes in the cancelling stage. Let $W$ be the set of synchronized nodes that are in the cancelling stage. Thus, $|W|=\Omega(n)$.
		
		Partition the $\Theta(n\log^{1-a} n)$ interactions of the canceling stage into $\Theta(\log^{1-a} n)$ periods, each period consists of $n$ interactions. Let $z_i$ be the number of nodes in $W$ at the beginning of the $i$'th period that have the minority color with respect to \emph{only} $W$; notice that it could be that the global minority color is different from the minority color with respect to $W$.
		Let $i^*$ denote the index of the last period of the stage.
		We will prove that $z_{i^*} < n/40$.
		We say that an interaction has been successful if either there was a cancellation or the number of nodes in $W$ before the interaction that have the minority color with respect to \emph{only} $W$ is (already) at most $41z_i/42$.
		
		Suppose the number of nodes in $W$ before an interaction that have the minority color with respect to \emph{only} $W$ is more than  $41z_i/42$. Then the probability that the interaction is a cancellation is at least $2\cdot(41z_i/42)^2/n^2$, and so if the number of nodes in $W$ at the end of a period that have the minority color with respect to \emph{only} $W$ is more than  $n/40$, then the expected number of cancellations within the period is $2\cdot (41z_i/42)^2/n \geq 2z_i/42$. Let $X$ be the random variable counting the number of cancellations in such a period.
		By a Chernoff bound,
		\[Pr[|X -2z_i/42| > \frac{1}{2} \cdot 2z_i/42] < \exp{(\frac{(1/2)^2\cdot 2z_i/42}{4})} =\]
		 \[\exp{(-z_i/336)}<\exp{(-n/13440)}.\]
		Therefore, in each such period, w.h.p., at least $z_i/42$ nodes from the minority colors are being canceled, and so $z_{i+1}\leq 41z_i/42$. After a sufficiently large constant number of periods, there are at most $n/40$ nodes from $W$ of the minority color. By the assumption of the lemma, the number of nodes with the majority color is at most $n/3+n/40$. The difference between the number of nodes with the majority color and the number of nodes with the minority color
		might be higher because of the out-of-sync nodes, but by no more than $o(n)$ by assumption.
		Thus, the number of nodes that are still colored after the last period is at most $n/40 +n/3+n/40 + o(n) < 4n/10$.
	\end{proof}
	
	\begin{lemma}\label{lemma:oursplit}
		Suppose that at the beginning of a phase, there are at least $\frac{6n}{10}$ uncolored nodes, and assume that the phase is before the critical phase. Then, w.h.p., at most $\frac{ n}{2^{\Omega(\log^{1-a} n)}}$ nodes become  out-of-sync during this phase.
	\end{lemma}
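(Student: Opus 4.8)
The plan is to classify every synchronized colored node that fails to split during the splitting stage of the phase into one of two groups, each of size $n/2^{\Omega(\log^{1-a}n)}$ w.h.p.: nodes whose counter lags far behind the average, and therefore do not have enough interactions inside the splitting stage; and nodes that do have enough interactions but are unlucky in their choice of partners. The first group is controlled directly by Lemma~\ref{lemma:oursync} (which itself rests on Invariant~\ref{inv:concentration} and Theorem~\ref{theorem:powerof2}): all but $n/2^{\Omega(\log^{1-a}n)}$ of the nodes have $\Omega(\log^{1-a}n)$ interactions between the beginning and the end of the splitting stage, and we simply allow every remaining ``slow'' node to become out-of-sync, which already fits inside the claimed bound. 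The real work is to bound the number of unlucky non-slow colored nodes, and for that I would reuse the period-by-period structure of the proof of Lemma~\ref{lemma:ourcancel}.

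First, the bookkeeping. Since there are at least $6n/10$ uncolored nodes at the beginning of the phase, at most $4n/10$ nodes are colored; the number of uncolored nodes does not decrease during the cancellation stage; no node loses its color during a splitting stage; and, since each node participates in at most one split per phase, the number of splits in the splitting stage is at most the number of colored nodes when that stage begins, hence at most $4n/10$. Therefore the number of uncolored nodes is at least $6n/10-4n/10=n/5$ at every moment of the splitting stage. Moreover, by the definition of the beginning of a stage, and because the phase precedes the critical phase (so the splitting stage is not truncated), all but $n/2^{\Theta(\log^{1-a}n)}$ of the nodes are inside the splitting stage throughout it; hence at least $n/5-n/2^{\Theta(\log^{1-a}n)}\ge n/6$ uncolored nodes are inside the splitting stage at every moment, and all of these have an unset split flag. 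Finally, by Lemma~\ref{lemma:oursync} and Theorem~\ref{theorem:powerof2}, w.h.p. the splitting stage lasts $\Theta(n\log^{1-a}n)$ interactions.

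Now I would partition these $\Theta(n\log^{1-a}n)$ interactions into $P=\Theta(\log^{1-a}n)$ periods of $n$ interactions, and let $C_j$ be the number of synchronized non-slow colored nodes that are inside the splitting stage and whose split flag is still unset at the start of period $j$; thus $C_1\le 4n/10$, and a node leaves the count permanently the moment it splits. Whenever such a counted node interacts with one of the $\ge n/6$ uncolored in-stage nodes it splits, so each interaction in period $j$ produces a split among the counted nodes with probability at least $\tfrac{2(C_j-s)(n/6)}{n^2}$, where $s$ is the number of splits already performed in the period. Running the ``successful interaction'' device from the proof of Lemma~\ref{lemma:ourcancel} (an interaction counts as successful if it produces a split among the counted nodes or the count has already dropped below a fixed fraction of $C_j$), a Chernoff bound then shows that, w.h.p., $C_{j+1}\le(1-\delta)C_j$ for an absolute constant $\delta\in(0,1)$. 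A union bound over the $P$ periods gives $C_P\le(1-\delta)^{\Omega(\log^{1-a}n)}\cdot\tfrac{4n}{10}=n/2^{\Omega(\log^{1-a}n)}$ w.h.p., and adding back the at most $n/2^{\Omega(\log^{1-a}n)}$ slow nodes proves the lemma.

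The main obstacle is exactly this concentration step. Per node, the probability of failing to split is only $2^{-\Omega(\log^{1-a}n)}$, which is far too weak to survive a union bound over the up to $4n/10$ colored nodes, and the per-node failure events are positively correlated because a node that splits consumes one of the scarce uncolored partners; the period argument sidesteps both difficulties. It is affordable here, unlike in the $5/3$-protocol's analysis of counter deviations, precisely because the target bound $n/2^{\Omega(\log^{1-a}n)}$ is still $n^{1-o(1)}\gg\log n$, so the per-period Chernoff bound retains a useful deviation term all the way down to the last period. The point that needs care is the interplay of constants: the $\Omega(\cdot)$ in the target exponent must be smaller than the constant produced by the geometric decay, which is why the statement keeps that exponent unspecified.
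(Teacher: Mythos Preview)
Your proposal is correct, and its high-level decomposition---``slow'' nodes handled by Lemma~\ref{lemma:oursync} versus nodes that had enough interactions but were unlucky---is exactly the paper's split into the \emph{deviation} reason and the \emph{short-phase} reason. Where you diverge is in how the second group is bounded. The paper does not run a period-by-period geometric decay. It argues directly that any non-slow colored node, having $\Omega(\log^{1-a}n)$ interactions inside the stage and facing at every such interaction a probability at least $2/11$ of meeting an in-stage uncolored partner, fails to split with probability at most $(1-2/11)^{\Omega(\log^{1-a}n)}=1/2^{\Omega(\log^{1-a}n)}$; it then passes to the expected number of such failures and closes with a one-line ``by a Chernoff bound''.

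Your route is longer but more honest on precisely the point you flag: the paper's final Chernoff step is applied to a sum of per-node failure indicators whose independence or negative association is never argued, whereas your period argument (borrowed from the proof of Lemma~\ref{lemma:ourcancel}) applies Chernoff only to counts of split events within a single period, where the ``successful interaction'' device makes the concentration clean. What the paper buys is brevity; what you buy is a self-contained concentration step whose error term $\exp(-\Omega(C_j))$ stays super-polynomially small because $C_j\ge n/2^{O(\log^{1-a}n)}=n^{1-o(1)}$ throughout. One small simplification available to you: once you commit to the period argument you no longer need the slow/non-slow distinction at all, since the periods directly bound the number of in-stage colored nodes with unset $\split$ flag at the end of the stage, and the at most $n/2^{\Theta(\log^{1-a}n)}$ stragglers outside the stage are already within the target.
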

	
	\begin{proof}
		Recall that the two reasons for a synchronized colored node $u$ not participating in a split during the splitting stage are the short-phase reason and the deviation reason.
		Lemma~\ref{lemma:oursync} states that at most $n/2^{\Omega(\log^{1-a} n)}$ nodes did not split due to the deviation reason. Here we focus on the short-phase reason.
		
		By the definition of a beginning of a stage, there are at most $o(n)$ nodes that did not enter the stage when the stage begins. Moreover, by assumption, there are at least $\frac{6n}{10}$ nodes with an empty color when the stage begins.
		Thus, during a stage, the probability for a colored node $u$ to participate in a split during any interaction of $u$ in the stage is at least $2/10 - o(1) \ge 2/11$.
		For a node $u$ that had at least $\Omega(\log^{1-a} n)$ interactions in the splitting stage, the probability that $u$ does not participate in a split in any of the interactions is at most $2/11^{\Omega(\log^{1-a} n)} = 1/2^{\Omega(\log^{1-a} n)}$. So, the expected number of nodes that did not split is at most $n/2^{\Omega(\log^{1-a} n)}$. By a Chernoff bound, w.h.p., the number of nodes that did not split is at most $n/2^{\Omega(\log^{1-a} n)}$.

	\end{proof}

	\begin{corollary}\label{cor:ourinductivephase}
		Suppose that before a phase that appears before a critical phase, there are at most $(i-1)\frac{ n}{2^{\Omega(\log^{1-a} n)}}$ nodes that are out-of-sync.
		Then, w.h.p., after the phase, at most $i\frac{ n}{2^{\Omega(\log^{1-a} n)}}$ nodes are out-of-sync.
	\end{corollary}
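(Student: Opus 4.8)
The plan is to prove the corollary as the inductive step of the correctness argument, by chaining Lemma~\ref{lemma:ourcancel} and Lemma~\ref{lemma:oursplit} across a single (regular, non-catch-up) phase and then adding the two error terms. Throughout, the phase in question is assumed to appear strictly before the critical phase, so both of those lemmas are available.

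First I would check that the hypothesis places us in the regime covered by Lemma~\ref{lemma:ourcancel}. Since the phase lies inside an epoch, $i$ is at most the number of phases per epoch, i.e.\ $i \le \log^a n$, so the assumed bound gives at most $(i-1)\frac{n}{2^{\Omega(\log^{1-a} n)}} \le \frac{n\log^a n}{2^{\Omega(\log^{1-a} n)}} = o(n)$ out-of-sync nodes just before the phase begins --- exactly the premise of Lemma~\ref{lemma:ourcancel}. Applying that lemma, w.h.p.\ at least $\frac{6n}{10}$ nodes have an empty color at the end of the cancellation stage. Between the end of the cancellation stage and the beginning of the splitting stage the number of uncolored nodes cannot drop below $\frac{6n}{10}$, since the only operation that colors a previously uncolored node is a split and splits require both interacting nodes to be in the splitting stage; hence at the start of the splitting stage there are still at least $\frac{6n}{10}$ uncolored nodes, which is the hypothesis needed for Lemma~\ref{lemma:oursplit}.

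Next I would invoke Lemma~\ref{lemma:oursplit}: with at least $\frac{6n}{10}$ uncolored nodes at the start of the splitting stage and the phase before the critical phase, w.h.p.\ at most $\frac{n}{2^{\Omega(\log^{1-a} n)}}$ nodes become out-of-sync during this phase. A node that is already out-of-sync before the phase stays out-of-sync throughout it (within an epoch a node only becomes synchronized again at the end of a catch-up phase, and this phase is not one), so the set of out-of-sync nodes after the phase is contained in the union of the $\le (i-1)\frac{n}{2^{\Omega(\log^{1-a} n)}}$ old out-of-sync nodes and the $\le \frac{n}{2^{\Omega(\log^{1-a} n)}}$ new ones. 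Replacing the two hidden constants in the exponents by their minimum, this total is at most $i\cdot\frac{n}{2^{\Omega(\log^{1-a} n)}}$, as claimed; the statement holds w.h.p.\ after a union bound over the high-probability events supplied by Lemma~\ref{lemma:ourcancel} and Lemma~\ref{lemma:oursplit}.

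I do not expect a genuine obstacle here. The two points requiring care are (i) verifying $(i-1)\frac{n}{2^{\Omega(\log^{1-a} n)}} = o(n)$, which is where the bound $i \le \log^a n$ enters and is what makes Lemma~\ref{lemma:ourcancel} applicable, and (ii) keeping the $\Omega(\cdot)$ notation in the two exponents consistent so the two error terms combine into a single term of the same form --- both are routine bookkeeping rather than a real difficulty.
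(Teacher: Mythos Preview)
Your proposal is correct and follows exactly the argument the paper intends: the corollary is stated without proof in the paper precisely because it is the immediate combination of Lemma~\ref{lemma:ourcancel} (to get $\ge 6n/10$ empty nodes) and Lemma~\ref{lemma:oursplit} (to bound the new out-of-sync nodes by $n/2^{\Omega(\log^{1-a}n)}$), added to the old out-of-sync nodes. Your bookkeeping points---that $i\le\log^a n$ forces the hypothesis to be $o(n)$, and that the two $\Omega$-constants must be reconciled---are the only details to fill in, and you handle them correctly.
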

	
	\begin{corollary}\label{conclusion:catch-upphasebound}
		
		Suppose that epoch $j$ happens before the critical phase. Then $U_j = o(n)$.
	\end{corollary}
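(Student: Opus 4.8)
\textbf{Proof proposal for Corollary~\ref{conclusion:catch-upphasebound}.}
The plan is to write $U_j$ as a weighted sum over the phases of epoch $j$ and bound each weight via Lemma~\ref{lemma:oursplit}. Recall that a node $u$ that goes out-of-sync during the $t$-th phase of the epoch has $\phi_u=t$, which stays fixed for the remainder of the epoch, so it contributes $val(u)=2^{\log^a n-t}$ to the sum evaluated at the start of the catch-up phase. Writing $N_t$ for the number of nodes that become out-of-sync during the $t$-th phase of epoch $j$, the out-of-sync nodes present when the catch-up phase begins form the disjoint union of these sets over $t\in\{1,\dots,\log^a n\}$, and hence
\[
U_j \;=\; \sum_{t=1}^{\log^a n} N_t\cdot 2^{\log^a n-t}.
\]

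First I would verify that the hypotheses of Lemmas~\ref{lemma:ourcancel} and~\ref{lemma:oursplit} hold at every phase of epoch $j$. Since epoch $j$ precedes the critical phase, I would iterate Corollary~\ref{cor:ourinductivephase} over the (at most) $\log^a n$ phases of the epoch, starting from the fact that epoch $j$ is entered with no out-of-sync nodes (true at initialization when $j=1$, and guaranteed for $j>1$ by the resynchronization at the end of the catch-up phase of epoch $j-1$, i.e.\ Lemma~\ref{lemma:suffdouble}). This gives that at every point inside epoch $j$ the number of out-of-sync nodes is at most $\log^a n\cdot n/2^{\Omega(\log^{1-a}n)}=o(n)$, using $a\le 1/2$. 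In particular, before each phase $t$ there are $o(n)$ out-of-sync nodes, so Lemma~\ref{lemma:ourcancel} yields at least $6n/10$ empty-colored nodes after its cancellation stage, and then Lemma~\ref{lemma:oursplit} gives $N_t\le n/2^{c\log^{1-a}n}$ w.h.p., where $c>1$ is the (uniform) constant hidden in the $\Omega(\cdot)$ of that lemma.

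Plugging this in and using $\sum_{t=1}^{\log^a n}2^{\log^a n-t}<2^{\log^a n}$,
\[
U_j \;<\; \frac{n}{2^{c\log^{1-a}n}}\cdot 2^{\log^a n} \;=\; \frac{n}{2^{\,c\log^{1-a}n-\log^a n}}.
\]
For $a\le 1/2$ we have $\log^a n\le\log^{1-a}n$, so the exponent is at least $(c-1)\log^{1-a}n\ge(c-1)\sqrt{\log n}$, which tends to infinity since $c>1$; hence $U_j=n\cdot 2^{-\Omega(\sqrt{\log n})}=o(n)$. A union bound over the $O(\log^a n)$ phases (and the polynomially many w.h.p.\ events inside each) keeps the statement w.h.p.

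I expect the main obstacle to be the second step, in two respects: (i) making precise that the constant in the $\Omega(\log^{1-a}n)$ of Lemma~\ref{lemma:oursplit} strictly exceeds $1$, so that at $a=1/2$ the geometric factor $2^{\log^a n}$ is genuinely dominated — this needs the per-phase interaction budget to be a large enough constant times $\log^{1-a}n$ together with Invariant~\ref{inv:concentration} for the deviation part; and (ii) the seemingly circular dependence between this corollary and Lemma~\ref{lemma:suffdouble}, which I would resolve by phrasing everything as one induction on the epoch index $j$: the inductive hypothesis for epochs $<j$ supplies the "no out-of-sync nodes entering epoch $j$" base case, the argument above then gives $U_j=o(n)$, and feeding this into Lemma~\ref{lemma:suffdouble} closes the induction.
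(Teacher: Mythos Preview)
Your proposal is correct and matches the paper's approach. The paper states this corollary without an explicit proof, relying on the surrounding discussion (the paragraph just before Section~4.1 bounding $U_j\le 2^{\log^a n}\cdot n\log^a n/2^{c\log^{1-a}n}$ and noting this is $o(n)$ for $a\le 1/2$) together with Corollary~\ref{cor:ourinductivephase}; your phase-by-phase decomposition $U_j=\sum_t N_t\,2^{\log^a n - t}$ is simply a slightly sharper way of writing the same bound, and your handling of the induction on epochs to break the apparent circularity with Lemma~\ref{lemma:suffdouble} is exactly what the paper leaves implicit.
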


	\begin{lemma}\label{lemma:suffdouble}
		Consider an epoch $j$ that appears before a critical phase.
		If $U_j = o(n)$, then, w.h.p., at the end of the catch-up phase of epoch $j$ all of the nodes are synchronized.
	\end{lemma}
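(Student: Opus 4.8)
The plan is to track every ``missing color'' as a token. For each node $u$ that is out-of-sync when the catch-up phase of epoch $j$ begins, place $val(u)=2^{\log^a n-\phi_u}$ tokens on $u$; the total number of tokens is $\sum_u val(u)=U_j=o(n)$. A special split performed by a node $x$ holding $t$ tokens moves $t/2$ of them to the newly-colored node $v$ and keeps $t/2$ on $x$, and both nodes get their $\phi$-value incremented; thus a token changes holder at most $\log^a n$ times, and I call it \emph{resolved} once its current holder has $\phi=\log^a n$. Since $\sum_u val(u)$ is conserved by special splits, if every token is resolved by the end of the catch-up phase then there are exactly $U_j$ out-of-sync nodes at that point, all with $\phi=\log^a n$, so every one of them sets $\sync\leftarrow 1$ and the lemma follows. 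Hence it suffices to prove that, w.h.p., every token is resolved.

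The per-token argument combines two facts. First, uncolored nodes stay abundant throughout the catch-up phase: since the epoch precedes the critical phase, every phase of it begins with $o(n)$ out-of-sync nodes (Corollary~\ref{cor:ourinductivephase}), so by Lemma~\ref{lemma:ourcancel} each cancellation stage of the epoch ends with at least $6n/10$ empty nodes; a splitting stage at most doubles the number of colored nodes; and the total number of special splits during the epoch is at most $U_j=o(n)$, because the descendants of a fixed out-of-sync node form a binary tree. Hence at least $\gamma n$ nodes are uncolored at every interaction of the catch-up phase, for some constant $\gamma>0$. Second, an \emph{unresolved} token is guaranteed $\Omega(\log n)$ \emph{opportunities}, meaning interactions of its current holder while that holder is out-of-sync with $\phi<\log^a n$: each node spends $c_{\mathrm{cu}}\log n$ interactions in the catch-up phase (where $c_{\mathrm{cu}}$ is a large constant fixed at the end), the successive holders of one token together span the catch-up phase, and a hand-off costs the token only the gap between the catch-up progress of the receiving node and that of the handing node. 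By Theorem~\ref{theorem:powerof2} no node runs more than $O(\log\log n)$ ahead of the average counter, and by Invariant~\ref{inv:concentration} a uniformly random uncolored node lags the average by at least $\ell$ with probability $O(2.8^{-\ell})$; summing these bounds over the at most $\log^a n$ hand-offs of a single chain shows that, with failure probability at most $n^{-2}$, the token loses only $O(\log n)$ opportunities, leaving $(c_{\mathrm{cu}}-O(1))\log n=\Omega(\log n)$.

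To conclude, at each opportunity the holder's partner is a uniformly random node, hence uncolored with conditional probability at least $\gamma-o(1)\ge\gamma/2$, and in that event a special split takes place and the token advances. Thus the number of advances of a token stochastically dominates $\mathrm{Bin}(\Omega(\log n),\gamma/2)$, and since at most $\log^a n=o(\log n)$ advances suffice to resolve it, a Chernoff bound makes the probability that a fixed token is unresolved at most $n^{-2}$ once $c_{\mathrm{cu}}$ is a large enough constant. A union bound over the $U_j<n$ tokens, which also absorbs the $n^{-2}$ events bounding the opportunity counts, finishes the proof.

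The main obstacle is the second fact above: that a token retains $\Omega(\log n)$ opportunities despite counter deviations and the slightly ragged end of the catch-up phase. This is exactly where the strong, essentially one-sided concentration of the $\lefty$ power-of-two choices strategy (Theorem~\ref{theorem:powerof2} and Invariant~\ref{inv:concentration}) is indispensable: it bounds a hand-off to an ``ahead'' node by only $O(\log\log n)$ and gives uncolored nodes an exponentially small chance of lagging far behind, so that the cumulative loss over a chain stays well inside the $\Theta(\log n)$ budget of the catch-up phase — an estimate that would break under the weaker concentration available without the power-of-two choices.
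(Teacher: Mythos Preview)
Your proof is correct and follows essentially the same approach as the paper: both decompose $U_j$ into singleton ``tokens,'' argue that at least a constant fraction of nodes remain uncolored throughout the catch-up phase (via Lemma~\ref{lemma:ourcancel}, the at-most-doubling in the splitting stage, and the $o(n)$ bound on special splits), and then use Invariant~\ref{inv:concentration} and Theorem~\ref{theorem:powerof2} to guarantee each token sees $\Omega(\log n)\gg\log^a n$ useful interactions before any holder exits the catch-up phase. Your opportunity-accounting with per-hand-off losses is a more explicit version of what the paper compresses into one sentence; note that the cleaner way to get the $\Omega(\log n)$ bound---and what the paper's ``$\Omega(n\log n)$ interactions during the catch-up phase'' is really invoking---is to observe that over the $\Theta(n\log n)$ global interactions while the average counter lies in $[T_1,T_2-O(\log\log n)]$, the current holder (whoever it is) is selected with probability $2/n$ each step independently of history, so the holder-interaction count is binomial and your hand-off ledger can be bypassed entirely.
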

	
	\begin{proof}
		Recall that $U_j =\sum_{\text{out-of-sync }u} val(u)$.
		We partition the missing splits that are counted by $U_j$ to singleton splits.
		When an interaction takes place between an out-of-sync node $u$ and an non-colored node $v$, $u$ passes $val(u)/2$ of its singletons to $v$. Thus, each singleton in $u$ needs to interact with $\log val(u) \le \log^a n$ non-colored nodes before the end of the catch-up phase in order to guarantee that after the catch-up phase there are no more out-of-sync nodes.
		
		By Lemma~\ref{lemma:ourcancel}, there are at most $\frac{4n}{10}$ synchronized colored nodes after the last cancellation stage in the epoch, and each one of these nodes could be split at most once during the last splitting stage. Thus, there are at most $\frac{8n}{10}$ synchronized colored nodes at the beginning of the catch-up phase. In addition, there are at most $o(n)$ out-of-sync nodes at the beginning of the catch-up phase, and so the number of synchronized nodes with an empty color at the beginning of the catch-up phase is more than $\frac{n}{10}$.

		By Invariant~\ref{inv:concentration} and Theorem~\ref{theorem:powerof2}, and similar to the proof of Lemma~\ref{lemma:oursync}, w.h.p., there are $\Omega(n\log n)$ interactions during a catch-up phase, and so, w.h.p., every singleton interacts with at least $\Omega(\log^a n)$ uncolored nodes. Thus, w.h.p., there are no out-of-sync nodes by the end of the catch-up phase.
	\end{proof}

	\begin{lemma}
		If an epoch contains a critical phase, then after either this epoch or the next epoch, there is at least one out-of-sync node.
	\end{lemma}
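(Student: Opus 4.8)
The plan is to show that the critical phase drives the number of colored nodes above $n/2$ within one phase, so that from then on uncolored nodes become too scarce both for every colored node to split and for a catch-up phase to repair the colored nodes that fail to do so. Concretely, let $k$ be the critical phase and $j$ the epoch containing it, so that $(j-1)\log^{a}n<k\le j\log^{a}n$ and phase $k+1$ lies in epoch $j$ or is the first regular phase of epoch $j+1$. First I would record the configuration just before phase $k$: by Corollary~\ref{cor:ourinductivephase} applied to the phases preceding $k$, and by Corollary~\ref{conclusion:catch-upphasebound} together with Lemma~\ref{lemma:suffdouble} applied to the epochs preceding epoch $j$, w.h.p.\ every earlier epoch ends fully synchronized, the doubling $d_{i+1}=2d_i$ holds for all $i<k$, and entering phase $k$ there are at most $o(n)$ out-of-sync nodes and at most $o(n)$ nodes of the minority color, so the number of colored nodes entering phase $k$ is $d_k\pm o(n)$. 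Since $d_{k-1}<n/3\le d_k=2d_{k-1}$, this yields $d_k\in[n/3,2n/3)$; the degenerate case $k=0$ is handled the same way (if $d_0\ge 2n/3$ then already $\Omega(n)$ nodes go out-of-sync in phase $0$ and the argument only shortens).

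The heart of the proof is a bound on the number of colored nodes. After phase $k$ there is, up to $o(n)$ stragglers, no minority color, so essentially no cancellation ever turns a colored node back to uncolored, and the splits of phase $k$ color $\min(d_k,n-d_k)-o(n)$ previously-uncolored nodes. Hence, w.h.p., immediately after phase $k$ there are at least $2\min(d_k,n-d_k)-o(n)\ge 2n/3-o(n)$ colored nodes, all but $o(n)$ of them of the majority color, and this persists (up to $o(n)$) in every subsequent phase; in particular there are at most $n/3+o(n)$ uncolored nodes from phase $k$ onward, and once this number drops below $o(n)$ it stays there, as the only operations that remove uncolored nodes are splits and catch-up splits. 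Therefore, in the splitting stage of phase $k+1$ --- whose cancellation stage is a no-op because all colored nodes share the majority color --- at least $2n/3-o(n)$ synchronized colored nodes are present while at most $n/3+o(n)$ uncolored partners are available, and since each split consumes a distinct uncolored node, at least $n/3-o(n)=\Omega(n)$ synchronized colored nodes fail to split and become out-of-sync during phase $k+1$; from that phase on there are only $o(n)$ uncolored nodes. The underlying w.h.p.\ estimates --- that the number of splits equals the number of uncolored nodes when colored nodes form a strict majority, and that at most $o(n)$ nodes deviate --- follow from Lemma~\ref{lemma:oursync}, Theorem~\ref{theorem:powerof2} and Invariant~\ref{inv:concentration} exactly as in the proofs of Lemmas~\ref{lemma:oursync}--\ref{lemma:suffdouble}.

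It then remains to show these out-of-sync nodes cannot all be repaired. If phase $k+1$ is not the last regular phase of its epoch, each of the $\Omega(n)$ nodes that went out-of-sync in phase $k+1$ has $\phi_u<\log^{a}n$ and hence $val(u)\ge 2$; repairing such a node by the end of its epoch's catch-up phase requires its chain of catch-up splits to color at least $val(u)-1\ge 1$ nodes that are currently uncolored, and the sets of nodes colored on behalf of distinct out-of-sync nodes are disjoint (each uncolored node is colored by at most one catch-up split), so repairing all of them would need $\Omega(n)$ uncolored nodes, whereas only $o(n)$ are ever available during the catch-up phase. Thus, w.h.p., at least one node is still out-of-sync at the end of that epoch, which is epoch $j$ or epoch $j+1$, as required. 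In the remaining case, phase $k+1$ is the last regular phase of its epoch --- necessarily epoch $j$, i.e.\ $k=j\log^{a}n-1$ --- and those $\Omega(n)$ nodes have $\phi_u=\log^{a}n$, so they re-synchronize at the end of epoch $j$; but if epoch $j$ ends with any other out-of-sync node we are done, and otherwise epoch $j+1$ begins with $n-o(n)$ synchronized colored nodes and only $o(n)$ uncolored nodes, so in its first regular phase $\Omega(n)$ synchronized colored nodes fail to split and become out-of-sync with $\phi_u=1$, hence with $val(u)=2^{\log^{a}n-1}\ge 2$, and the argument of the previous sentences applied to epoch $j+1$ completes the proof.

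The step I expect to be the main obstacle is this last one: turning ``$\Omega(n)$ out-of-sync nodes with $val\ge 2$ cannot all be repaired with only $o(n)$ uncolored nodes'' into a rigorous statement --- essentially a quantitative converse of the coloring-budget half of Lemma~\ref{lemma:suffdouble} --- together with the bookkeeping of which epoch each of phases $k$, $k+1$, $k+2$ falls into and the $val=1$ boundary case at an epoch end. Everything else is a routine adaptation of the estimates already developed for Lemmas~\ref{lemma:oursync}--\ref{lemma:suffdouble}.
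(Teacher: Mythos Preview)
Your plan diverges sharply from the paper's, which dispatches the lemma in four lines by a purely deterministic contradiction.  The paper assumes that neither epoch $j$ nor epoch $j{+}1$ ends with an out-of-sync node; then every colored node has, directly or through catch-up, completed all of its splits, so the difference $d=|b-w|$ --- which cancellations never change --- doubles once per phase.  Since the critical phase $k$ lies in epoch $j$, at the beginning of the last phase of epoch $j$ one already has $d\ge n/3$; one more doubling gives $d\ge 2n/3$ entering epoch $j{+}1$, and a single further phase forces $d\ge 4n/3>n$, which is impossible.  That is the whole proof.

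You instead try to follow the raw counts of colored and uncolored nodes through phases $k$ and $k{+}1$ and then show that the catch-up phase lacks enough uncolored partners to repair $\Omega(n)$ out-of-sync nodes.  Two concrete problems arise.  First, the lemma is a deterministic statement about any execution, yet your argument leans throughout on ``w.h.p.'' estimates (Lemmas~\ref{lemma:oursync}--\ref{lemma:suffdouble}, Corollary~\ref{cor:ourinductivephase}); once the interaction sequence is fixed there is no probability space left to invoke.  Second, the assertion that ``entering phase $k$ there are at most $o(n)$ nodes of the minority color, so the number of colored nodes entering phase $k$ is $d_k\pm o(n)$'' is not supported by the cited results and is false in general: the proof of Lemma~\ref{lemma:ourcancel} yields only $\le n/40$ minority nodes \emph{after} a pre-critical cancellation, hence $\Theta(n)$ after the subsequent split, and neither Lemma~\ref{lemma:ourcancel} nor Lemma~\ref{lemma:oursplit} applies at the critical phase itself.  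The downstream counting (``$\ge 2n/3-o(n)$ colored after phase $k$'', ``$\le n/3+o(n)$ uncolored'', the $val\ge 2$ bookkeeping) rests on this unjustified premise.

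The conceptual point you are missing is exactly what makes the paper's argument short: track the \emph{difference} $d$ rather than the colored count.  Because $d$ is invariant under cancellation, you never need to know how many minority nodes survive or how much cancellation happens in or after the critical phase; and because the contradiction hypothesis (``both epochs end fully synchronized'') already encodes that every intended split was realized, you never need a separate budget argument for the catch-up phase.
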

	
	\begin{proof}
		Assume by contradiction that the lemma is false. This means that in both epochs all of the colored nodes managed to split. Since there are at least $n/3$ more nodes with the majority color than nodes with the minority color at the beginning of the last phase of the first epoch, then right before the second epoch, the difference is at least $2n/3$. However, this means that after the second epoch there are at least $4n/3$ colored nodes, which is a contradiction.
	\end{proof}
	
	\section{Improvement to an \texorpdfstring{\textless$O(\log^{3/2}n),O(\log n)$\textgreater\space}{} Protocol}\label{section:lognstates}
	In this section, we describe how to reduce the number of states to $O(\log n)$.
	The main idea, inspired by Alistarh et al.~\cite{AAG18} and Berenbrink et al~\cite{BEFKKR2018}, is to partition the nodes to two types of nodes: \emph{workers} and \emph{clocks}.
	Moreover, the protocols no longer use the counters $\alpha_u$ and $\beta_u$ from the $2$-protocol. Instead, we introduce a new counter $\gamma_u$ that counts until $O(\log n)$.
	The counter $\gamma_u$ is used differently, depending on whether $u$ is a clock or a worker.

	Intuitively, a worker node $u$ runs the protocol from Section~\ref{sec:ourprotcol}, while counting the phase number using $\gamma_u$.
	We also add another $O(1)$ bits per node to indicate the current stage.
	Notice that the number of the current epoch can be derived from the phase counters.

	Clock nodes do not participate in the protocol from Section~\ref{sec:ourprotcol}.
	Instead, they are responsible for the information needed to move a worker node from one phase to the next, and from one stage to the next.
	Intuitively, clock nodes keep track of progress within an epoch.
	Each  clock node $u$ uses $\gamma_u$ as a counter that is incremented with applying a variant of the power-of-two choices strategy whenever two clock nodes interact; the details of this strategy are given below.
	When a clock increments its counter  and the counter reaches $\Theta(\log n)$, the clock sets the counter back to 0, indicating that a new epoch has begun.

	A worker node $u$ moves from one stage to the next stage when $u$ interacts with a clock whose counter indicates that the phase has progressed to the next stage.
	Similarly, $u$ moves from one phase to the next phase when $u$ interacts with a clock whose counter indicates that the phase has progressed to the next phase.
	Notice that the same ideas allow for moving into and out of a catch-up phase.
	
	\paragraph{Left-bias power-of-two choices.} The power-of-two choices strategy that we apply is the $\lefty$ variation: the clock nodes are partitioned into two equal-sized sets, the \emph{Left} set and the \emph{Right} set, and counters are incremented only during interactions of two clock nodes from different sets.
	In such a case, when the counters of the two clock nodes are equal, the clock from the Left set increments its counter. Otherwise, the clock with the smaller counter increments its counter. We show in Lemma~\ref{lemma:invl12} that Invariant~\ref{inv:concentration} holds for $\lefty$. Notice that \cite{BCSV2006} proved that Theorem~\ref{theorem:powerof2} holds for $\lefty$.
	
	\paragraph{Creating workers and clocks.}
	We now describe the procedure for creating workers and clocks.
	The very first interaction of a node $u$ is called the \emph{initial interaction of $u$}.
	When two nodes $u$ and $v$ have an interaction that is an initial interaction for both nodes, where $u$ is a black node and $v$ is a white node, then $u$ becomes a Right-clock and $v$ becomes a Left-clock. such an interaction is called a \emph{first cancellation}.
	Any other type of initial interaction makes the node a worker node.
	After the initial interactions take place, it is straightforward to see that the number of Right-clocks is exactly the number of Left-clocks.
	Moreover, the majority color within the working nodes is the same as the majority color for all of the nodes.
	
	In order to show that our protocol works, w.h.p., we  show that the number of clocks and the number of workers is $\Omega(n)$. However, if the number of nodes with the minority color is $o(n)$, and so $d_0$ (which is the difference between the number of nodes with the majority color and the number of nodes with the minority color) is very large, then it is impossible to guarantee that there are $\Omega(n)$ clocks.
	However, in such a case, if the workers execute the $2$-protocol (Section~\ref{section:theclassicprotocol}) then by Lemma~\ref{lem:critical-phase}, w.h.p., the workers find the majority color within two phases of the $2$-protocol.
	
	Thus, a worker node $u$ right after the initial interaction simulates two phases of  the $2$-protocol, while using the bits of $\gamma_u$ to implement $\alpha_u$ and $\beta_u$; notice that the bits in $\gamma_u$ suffice, since there are only $O(\log n)$ interactions during the first two phases of the $2$-protocol.
	During these two phases, interactions between $u$ and clock nodes are ignored.
	If $u$ does not split during one of the phases, then $u$ initiates a broadcast protocol with the color $c_u$ as the majority.
	Otherwise, $u$ reverts to the $3/2$-protocol, but now using the clock nodes in order to keep track of phases.
	Notice that when $u$ moves from the two phases of  the $2$-protocol to the new protocol, the information that was in $\alpha_u$ is overwritten, since now $\gamma_u$ is used for the new protocol.

	At the same time, a clock node $v$ counts until some value $t = \Theta(\log n)$, using the power-of-two choices strategy, where $t$ is chosen to be large enough so that, w.h.p., all of the worker nodes have completed their two phases of the $2$-protocol. After the value $t$ is reached, the next increment on the counter in $v$ resets the counter to $0$.
	Notice that a worker node $u$ that finished the two phases of the $2$-protocol begins executing the protocol from Section~\ref{sec:ourprotcol} after interacting with a clock node $v$ whose counter has been reset to $0$ at least once.
	
	\paragraph{Correctness.} If $d_0>n/3$, then there will be $\Omega(n)$ workers, and so, w.h.p.,  by Lemma~\ref{lem:critical-phase}, after at most two more phases of the $2$-protocol, there exists a node broadcasting the majority color.
	Thus, for the following assume that $d_0 \le n/3$, and so there are more than $n/3$ nodes of the minority color. The following lemma shows that we have $\Theta(n)$ clocks and $\Theta(n)$ workers.
	
	\begin{lemma}\label{lemma:clocksworkersthetan}
		If $d_0 \le n/3$ then,  w.h.p., the protocol produces $\Theta(n)$ clocks and $\Theta(n)$ workers.
	\end{lemma}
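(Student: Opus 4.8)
The plan is to analyze only the first $\Theta(n)$ interactions and to lower-bound, separately, the number of nodes that become clocks and the number that become workers; the trivial upper bound of $n$ on each count then yields $\Theta(n)$ for both. The one deterministic fact I would rely on is the following: since $d_0 \le n/3$, each color class contains at least $n/3$ nodes, and during the first $n/12$ interactions at most $n/6$ distinct nodes have undergone their initial interaction, so throughout this window the set $F$ of still-fresh nodes contains at least $n/6$ black nodes and at least $n/6$ white nodes.

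First I would examine, for each of the first $n/12$ interactions and conditioned on the whole history so far, the probability that the chosen pair is a first cancellation, that is, two fresh nodes of opposite colors (which turns them into a Left-clock and a Right-clock). Since $F$ contains at least $n/6$ fresh nodes of each color, this probability is at least $(n/6)^2/\binom{n}{2} \ge 1/18$, uniformly over histories. Next I would bound, in the same way, the probability that the chosen pair consists of two fresh nodes of the \emph{same} color (which turns both of them into workers): this probability equals $\bigl(\binom{W}{2}+\binom{B}{2}\bigr)/\binom{n}{2}$, where $W$ and $B$ count the fresh white and black nodes, and using $W^2+B^2\ge (W+B)^2/2$ together with $W+B\ge 5n/6$ it is bounded below by an absolute positive constant, again uniformly over histories.

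Because these conditional probabilities are bounded below by constants no matter what happened before, I can stochastically dominate the two indicator sequences by sequences of i.i.d.\ Bernoulli variables and apply a Chernoff bound. This gives that, w.h.p., among the first $n/12$ interactions there are $\Omega(n)$ first cancellations and $\Omega(n)$ same-color fresh pairs, and hence that, w.h.p., $\Omega(n)$ nodes become clocks and $\Omega(n)$ nodes become workers. Since every node becomes exactly one of the two at its initial interaction, the two counts sum to $n$, so each of them is simultaneously $\Omega(n)$ and $n-\Omega(n)$, that is, $\Theta(n)$.

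The step I expect to be the only real subtlety is this concentration argument, because the composition of $F$ evolves randomly as nodes are consumed, so the events ``interaction $i$ is a first cancellation'' are not independent across $i$. What makes it go through cleanly is precisely the crude deterministic bound that at most $n/6$ nodes are touched within the first $n/12$ interactions: it forces the relevant per-step conditional probabilities to stay above fixed positive constants regardless of history, so no finer tracking of the evolution of $F$ is needed and the coupling with i.i.d.\ Bernoulli trials is legitimate.
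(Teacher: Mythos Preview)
Your proof is correct and follows essentially the same approach as the paper: examine the first $n/12$ interactions, use the deterministic fact that at most $n/6$ nodes have been touched so that each color still has at least $n/6$ fresh nodes, lower-bound the per-step conditional probabilities by absolute constants, and apply Chernoff via stochastic domination by i.i.d.\ Bernoullis. Your worker-count argument, which counts interactions between two fresh nodes of the \emph{same} color, is in fact cleaner than the paper's, which bounds the probability of ``not having a first cancellation'' --- a slightly loose formulation, since such an interaction need not create any worker when both nodes are already non-fresh.
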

	
	\begin{proof}
		Since $d_0 \le n/3$, there are at least $n/3$ nodes with the minority color.
		In each of the first $n/12$ interactions, there is a constant probability of at least $1/24$ to have a first cancellation. 
		Thus, there are at least $n/24$ clocks in expectation after $n/12$ interactions, and by a Chernoff bound, w.h.p., at least $n/30$ clocks.
		
		In each of the first $n/12$ interactions, there is a constant probability of at least $1/24$ to \emph{not} have a first cancellation. Similarly, w.h.p., there are at least $n/30$ worker nodes.
	\end{proof}

	Let $n_w$ be the number of worker nodes and $n_c$ be the number of clock nodes.
	Next we show that most of the worker nodes are highly synchronized in their progress.
	Notice that Lemma~\ref{lemma:oursync} still holds, as now there are less increments of counters, and so, w.h.p., almost all the nodes have at least $\Omega(\log ^{1-a}n)$ interactions between the beginning and the end of each phase.
	In the following lemma we show that most of the workers have enough interactions \emph{in} every phase.
	
	\begin{lemma}\label{lemma:workersync}
		At the end of the $i$th phase, w.h.p., all but at most $i\cdot n_w/2^{\Theta(\log ^{1-a}n)}$ of the worker nodes have $\Theta(\log ^{1-a}n)$ interactions in phase $i$ since the beginning of the phase.
	\end{lemma}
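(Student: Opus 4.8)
The plan is to prove, by induction on $i$, the statement that all but at most $i\cdot n_w/2^{\Theta(\log^{1-a}n)}$ worker nodes are \emph{on track through phase $i$}, meaning that for every phase $j\le i$ the node belonged to phase $j$ throughout the \emph{core} of phase $j$ (the interval between the beginning of phase $j$, when all but $n/2^{\Theta(\log^{1-a}n)}$ nodes have entered it, and its end, when the first node leaves it) and had $\Theta(\log^{1-a}n)$ interactions during that core. Granting this for $i$, the lemma follows: a worker on track through phase $i$ has, by definition, $\Omega(\log^{1-a}n)$ interactions in phase $i$ since the phase began, and the matching $O(\log^{1-a}n)$ upper bound holds because, by the concentration of clock counters established below, phase $i$ lasts only $O(\log^{1-a}n)$ parallel time.

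First I would set up a dictionary between phases and clock counters. By Lemma~\ref{lemma:clocksworkersthetan} we may assume $n_c=\Theta(n)$ and $n_w=\Theta(n)$, so a constant fraction of any clock's interactions are with a clock of the opposite ($\lefty$ / Right) set; hence a clock's counter grows linearly, up to constants, in its number of interactions, and each phase occupies a block $[\tau_{i-1},\tau_i)$ of $\Theta(\log^{1-a}n)$ consecutive counter values, where I would fix the hidden constant in $\tau_i-\tau_{i-1}$ to be large. By Lemma~\ref{lemma:invl12} (which is Invariant~\ref{inv:concentration} for $\lefty$) together with Theorem~\ref{theorem:powerof2} for $\lefty$, at every time the maximum clock counter is within $O(\log\log n)$ of the average and the fraction of clocks whose counter is below the average by at least $k$ is at most $1.3\cdot 2.8^{-k}$. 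Feeding this into the argument of Lemma~\ref{lemma:oursync}, applied to the clocks, shows that throughout the core of phase $i$ the average clock counter exceeds $\tau_{i-1}$ by $\Theta(\log^{1-a}n)$ while remaining $\omega(\log\log n)$ below $\tau_i$; and by the choice of a large constant in $\tau_i-\tau_{i-1}$ the core still spans $\Theta(\log^{1-a}n)$ parallel time.

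For the inductive step, fix a worker $u$ on track through phase $i-1$ and condition on the entire sequence of clock--clock interactions, hence on every clock's counter trajectory. Since the average clock counter during the core of phase $i$ exceeds $\tau_{i-1}$ by $\Theta(\log^{1-a}n)$, any clock that $u$ meets inside this core displays a counter $\ge \tau_{i-1}$ except with probability $1.3\cdot 2.8^{-\Theta(\log^{1-a}n)}=1/2^{\Theta(\log^{1-a}n)}$; since the maximum clock counter stays $\omega(\log\log n)$ below $\tau_i$ there, no clock $u$ meets can push it past phase $i$ (and workers never move backwards). Adding a Chernoff bound on the number of interactions $u$ has in the first $\Theta(\log^{1-a}n)$ parallel steps of the core, and over the whole core, we get that with probability $1-1/2^{\Theta(\log^{1-a}n)}$ the worker $u$ enters phase $i$ near the start of its core, stays in phase $i$ until at least the end of the core, and has $\Theta(\log^{1-a}n)$ interactions in between --- all of them in phase $i$. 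A Chernoff-type bound over the $\ge n_w/2$ workers that are on track through phase $i-1$ (the dependence induced by the shared interaction sequence is handled exactly as in Lemma~\ref{lemma:oursync} and in~\cite{BEFKKR2018}) then shows that, w.h.p., all but $n_w/2^{\Theta(\log^{1-a}n)}$ of them remain on track through phase $i$; combined with the inductive hypothesis this gives the bound $i\cdot n_w/2^{\Theta(\log^{1-a}n)}$ and closes the induction.

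The part I expect to be the main obstacle is calibrating the two length scales against each other: a phase spans only $\Theta(\log^{1-a}n)$ counter units, while the clock-counter concentration of Lemma~\ref{lemma:invl12} only gives a stale-clock fraction of $1/2^{\Theta(\log^{1-a}n)}$ once we are already $\Theta(\log^{1-a}n)$ counter units past $\tau_{i-1}$. Ensuring that the core is still $\Theta(\log^{1-a}n)$ long after discarding this prefix forces the per-phase counter budget to be a sufficiently large constant multiple of $\log^{1-a}n$, and one has to verify that this is consistent with the constants hidden in Lemma~\ref{lemma:invl12} and Theorem~\ref{theorem:powerof2}. The remaining point --- the concentration over workers despite the shared interaction sequence --- is routine given the tools the paper already invokes.
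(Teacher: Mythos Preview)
Your proposal is correct and follows essentially the same approach as the paper. The paper resolves the calibration you flag as the main obstacle by partitioning the phase's clock-counter range $[t_s,t_e)$ into thirds: the first third lets all but $n_c/2^{\Omega(\log^{1-a}n)}$ clocks enter phase $i$ (via Invariant~\ref{inv:concentration} and Theorem~\ref{theorem:powerof2}), the middle third gives workers on track through phase $i-1$ enough interactions to meet such a clock and enter phase $i$, and the last third supplies the $\Theta(\log^{1-a}n)$ in-phase interactions; the induction on $i$ and the Chernoff step over workers are exactly as you describe.
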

	
	\begin{proof}
		Let $t_s, t_e$  be the largest counter value of a clock node when phases $i$ and $i+1$ begin, respectfully, and let $t = t_e - t_s = \Theta(\log^{1-a}n)$.
		We say that a phase \emph{is in time} $\tau$ at the first time a clock node in the phase has a counter value of $t_s + \tau$.
		By Invariant~\ref{inv:concentration} and Theorem~\ref{theorem:powerof2}, when the phase is in  time $t/3$, w.h.p., all but at most $n_c/2^{\Omega(\log ^{1-a}n)}$ of the clock nodes are in the phase.

		Now consider the moment when the phase is in time $2t/3$.
		Let $W_{k}$ be the set of nodes that are updated to be in phase $k$. By induction, prior to the $i$th phase, there were $|W_{i-1}| = n_w(1-(i-1)\cdot n_w/2^{\Theta(\log ^{1-a}n)})$ worker nodes that were in phase $i-1$. By Lemma~\ref{lemma:oursync}, all but $n_w/2^{\Omega(\log ^{1-a}n)}$ of the worker nodes have $\Theta(\log ^{1-a}n)$ interactions between time $t/3$ and $2t/3$.
		Since there are $\Theta(n)$ clock nodes in the system and w.h.p., at time $t/3$ most of clock nodes are in the phase, then the probability that an interaction between time $t/3$ and time $2t/3$ involving a worker node in phase $i-1$ is with a clock node in phase $i$ is constant.
		Thus, the probability that a worker node is not in phase $i$ by time $2t/3$ is at most $1/2^{\Omega(\log ^{1-a}n)}$.
		Therefore, the expected number of worker nodes that are not in phase $i$ by time $2t/3$ is at most $|W_{i-1}|/2^{\Omega(\log ^{1-a}n)}$, and so by a Chernoff bound, w.h.p., the number of worker nodes that are not in phase $i$ by time $2t/3$ is at most $|W_{i-1}|/2^{\Omega(\log ^{1-a}n)}$.
		
		Finally, by Lemma~\ref{lemma:oursync}, between time $2t/3$ and time $t$, all but $n_w/2^{\Omega(\log ^{1-a}n)}$ of the worker nodes have had $\Theta(\log ^{1-a}n)$ interactions.
		
		In conclusion, the number of worker nodes that did not have $\Omega(\log ^{1-a}n)$ interactions in phase $i$ is, w.h.p., at most $n_w/2^{\Omega(\log ^{1-a}n)}$.
	\end{proof}
	
	By Lemma~\ref{lemma:workersync} and Lemma~\ref{lemma:oursplit}, in each epoch that is before a critical phase, w.h.p., there are at most $o(n)$ worker nodes that are out-of-sync entering the catch-up phase of the epoch, as desired by Lemma~\ref{lemma:suffdouble}.
	Thus, we conclude the following:
	
	\begin{lemma}\label{lemma:ourrunningtime}
		After $O(\log^{1.5} n)$ parallel time, w.h.p., the protocol reaches a stable correct configuration.
	\end{lemma}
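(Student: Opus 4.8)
The plan is to assemble the running-time bound by accounting separately for the three sources of cost identified in the analysis of the $5/3$-protocol, now with $a=1/2$, and to invoke the structural lemmas established in Sections~\ref{sec:ourprotcol} and~\ref{section:lognstates} to certify that the behaviour stays ``as expected'' w.h.p. throughout. First I would split into the two cases of Lemma~\ref{lemma:clocksworkersthetan}. If $d_0 > n/3$, then by the discussion preceding Lemma~\ref{lemma:clocksworkersthetan} there are $\Omega(n)$ workers, and each worker simulates two phases of the $2$-protocol; by Lemma~\ref{lem:critical-phase} (applied to the sub-population of workers, whose majority color agrees with the global majority) some worker stops splitting and initiates a broadcast of the majority color within those two phases, which costs $O(\log n)$ parallel time for the simulation plus $O(\log n)$ for the broadcast, hence $O(\log n) = O(\log^{3/2} n)$. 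So for the remainder assume $d_0 \le n/3$, where by Lemma~\ref{lemma:clocksworkersthetan} we have $\Theta(n)$ clocks and $\Theta(n)$ workers w.h.p.

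Next I would run the inductive argument over phases. By Lemma~\ref{lemma:workersync}, at the end of phase $i$ all but $i\cdot n_w/2^{\Theta(\log^{1-a} n)}$ workers have had $\Theta(\log^{1-a} n)$ interactions in that phase; combining this with Lemma~\ref{lemma:oursync}, Lemma~\ref{lemma:ourcancel}, Lemma~\ref{lemma:oursplit}, and Corollary~\ref{cor:ourinductivephase}, as long as we are before the critical phase the number of out-of-sync workers stays $o(n)$ after every phase, and therefore $U_j = o(n)$ entering each catch-up phase (Corollary~\ref{conclusion:catch-upphasebound}). Then Lemma~\ref{lemma:suffdouble} guarantees that, unless an epoch contains (what would have been) a critical phase, all nodes are synchronized again at the end of that epoch's catch-up phase, so the ``as expected'' hypothesis propagates by induction. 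The final unnumbered lemma of Section~\ref{sec:ourprotcol} ensures that the first epoch containing a critical phase is reached after $O(\log n)$ phases total, i.e. after $O(\log^{1-a} n)$ epochs, at which point some worker goes out-of-sync at an epoch boundary and, by the analog of Lemma~\ref{lem:critical-epoch} for our setting, triggers the tail execution of the $2$-protocol starting two epochs back.

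Now the time accounting, exactly as in Section~\ref{section:five-thirds} but with $a=1/2$: there are $O(\log n)$ phases before a critical phase, each costing $O(\log^{1-a} n)$ interactions per node, for $O(\log^{2-a} n) = O(\log^{3/2} n)$; the $O(\log^{1-a} n)$ epochs each contribute an $O(\log n)$-interaction catch-up phase, again $O(\log^{2-a} n) = O(\log^{3/2} n)$; and the tail run of the $2$-protocol over the last two epochs costs $O(\log^{1+a} n) = O(\log^{3/2} n)$ by the same doubling estimate $d_{(j-2)\log^a n} > n/(3\cdot 2^{2\log^a n})$ that bounds the number of remaining phases by $O(\log^a n)$. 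Adding the $O(\log n)$ cost of the very first two simulated $2$-protocol phases and the final majority broadcast, the total is $O(\log^{3/2} n)$, and since the ambassador protocol runs in the background the configuration reached is stable and correct. Each of the finitely many w.h.p.\ events invoked fails with probability $n^{-\Omega(1)}$, so by a union bound over the $O(\log n)$ phases the whole execution behaves as analyzed w.h.p.

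The main obstacle is making precise why the power-of-two (indeed $\lefty$) counter concentration of Invariant~\ref{inv:concentration} and Theorem~\ref{theorem:powerof2} still applies when only a $\Theta(1)$ fraction of node-pairs are Left--Right clock pairs that actually increment counters, and why, despite workers occasionally interacting with clocks that are slightly ahead or behind, Lemma~\ref{lemma:workersync}'s $o(n)$ bound on lagging workers composes across all $O(\log n)$ phases without the error terms $i\cdot n_w/2^{\Theta(\log^{1-a} n)}$ blowing past $o(n)$; this is exactly where the choice $a=1/2$ is tight, since $\log n \cdot n/2^{\Theta(\sqrt{\log n})} = o(n)$, and where the simpler concentration afforded by the power-of-two strategy (maximum counter within $O(\log\log n)$ of average) is what lets the $\log^{1-2a}$ term from the $5/3$-protocol analysis be replaced by $\log^{1-a}$.
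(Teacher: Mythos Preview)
Your proposal is correct and follows essentially the same approach as the paper. In fact, the paper does not give a standalone proof of this lemma at all: it merely states the lemma as a conclusion after the one-sentence observation ``By Lemma~\ref{lemma:workersync} and Lemma~\ref{lemma:oursplit}, in each epoch that is before a critical phase, w.h.p., there are at most $o(n)$ worker nodes that are out-of-sync entering the catch-up phase of the epoch, as desired by Lemma~\ref{lemma:suffdouble}.'' Your write-up is a faithful and more explicit expansion of the argument the paper leaves implicit --- the case split on $d_0$, the inductive invocation of Lemmas~\ref{lemma:ourcancel}--\ref{lemma:suffdouble} and \ref{lemma:workersync}, and the $O(\log^{2-a}n + \log^{1+a}n)$ accounting from Section~\ref{section:five-thirds} specialized to $a=1/2$ --- and the ``main obstacle'' you flag at the end is exactly what the paper addresses via Lemma~\ref{lemma:invl12} in the final section.
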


	In order to achieve the desired running time in expectation, our protocol runs the ambassador protocol in the background. If our protocol succeeds, then the output is determined by our protocol, and otherwise, the output is determined by the ambassador protocol, which is a low probability event. Thus, the expected time is still $O(\log^{1.5} n)$.
	
	\begin{theorem}
		The exact majority can be computed in $O(\log^{3/2} n)$ parallel time and $O(\log n)$ states per node, w.h.p. and in expectation.
	\end{theorem}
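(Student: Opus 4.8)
The plan is to assemble the theorem from the three ingredients developed in Sections~\ref{sec:ourprotcol} and~\ref{section:lognstates}: the state bound, the high-probability running-time bound, and a fallback mechanism that upgrades the high-probability bound to an expectation bound.

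For the state bound, I would enumerate exactly what each node stores in the final protocol of Section~\ref{section:lognstates}: a color $c_u\in\{b,w,e\}$, the counter $\gamma_u$ ranging over $\Theta(\log n)$ values, an $O(1)$-bit field for the current stage, an $O(1)$-bit field recording whether $u$ is a worker or a clock (and, for a clock, whether it lies in the \emph{Left} or \emph{Right} set), the flags $\done_u,\fail_u,\sync_u$, and the value $\phi_u\in[0,\log^a n]$. Since $\gamma_u$ uses $\log\log n+O(1)$ bits and $\phi_u$ uses $a\log\log n=O(\log\log n)$ bits, the total is $O(\log\log n)$ bits, i.e. $O(\log n)$ states; the background ambassador protocol of~\cite{MNRS2014} adds only $O(1)$ states, and the transient two-phase simulation of the $2$-protocol that a worker performs right after its initial interaction reuses the bits of $\gamma_u$ for $\alpha_u,\beta_u$, so it does not increase the count.

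For the running-time bound I would split on $d_0$. If $d_0>n/3$, then there are $\Omega(n)$ workers, and, w.h.p., by Lemma~\ref{lem:critical-phase} applied to the two simulated phases of the $2$-protocol some worker broadcasts the majority color within $O(\log n)$ parallel time, with the broadcast costing another $O(\log n)$, so the protocol stabilizes in $O(\log n)$ parallel time. If $d_0\le n/3$, then Lemma~\ref{lemma:clocksworkersthetan} gives $\Theta(n)$ clocks and $\Theta(n)$ workers w.h.p., and Lemma~\ref{lemma:ourrunningtime} (which rests on Lemmas~\ref{lemma:oursync}, \ref{lemma:ourcancel}, \ref{lemma:oursplit}, \ref{lemma:workersync}, Corollaries~\ref{cor:ourinductivephase} and~\ref{conclusion:catch-upphasebound}, Lemma~\ref{lemma:suffdouble}, and the critical-epoch lemma, together with the $\lefty$ concentration guarantees of Invariant~\ref{inv:concentration} and Theorem~\ref{theorem:powerof2}) yields a stable correct configuration after $O(\log^{3/2}n)$ parallel time w.h.p. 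Combining the two cases, the protocol stabilizes in $O(\log^{3/2}n)$ parallel time w.h.p. For the expectation bound, a node outputs $c_u$ unless it has detected a failure — a $\fail$ flag was set, an inconsistency was observed during a broadcast, or a received majority-color broadcast contradicts $c_u$ — in which case it adopts the ambassador protocol's output; since the ambassador protocol always reaches a valid stable configuration in $O(n\log n)$ parallel time (both w.h.p. and in expectation), so does the combined protocol. The combined protocol departs from the $O(\log^{3/2}n)$-time behavior only on an event of probability $n^{-\Omega(1)}$, on which it still finishes in $O(n\log n)$ parallel time, so the expected parallel time is $O(\log^{3/2}n)+n^{-\Omega(1)}\cdot O(n\log n)=O(\log^{3/2}n)$.

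The main obstacle is not any single calculation but the careful accounting that makes the pieces compose: verifying that the $O(\log n)$-state protocol of Section~\ref{section:lognstates} faithfully simulates the $3/2$-protocol of Section~\ref{sec:ourprotcol} — in particular that clock counters, driven by the $\lefty$ strategy, move workers between stages and phases with the same concentration used in Lemmas~\ref{lemma:oursync} and~\ref{lemma:workersync} — and that the low-probability failure events arising across all the invoked lemmas union-bound to $n^{-\Omega(1)}$, so that the fallback to the ambassador protocol preserves both correctness (every node's output eventually stabilizes on the true majority color) and the expected running time.
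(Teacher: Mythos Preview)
Your proposal follows essentially the same route as the paper: the paper's ``proof'' of the theorem is just the paragraph immediately preceding it (ambassador fallback for expectation) together with Lemma~\ref{lemma:ourrunningtime} for the w.h.p.\ time bound and the construction of Section~\ref{section:lognstates} for the state bound, and you have correctly identified and assembled these three ingredients, with the same case split on $d_0$.

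One genuine slip in your accounting: you write that $\gamma_u$ takes $\log\log n+O(1)$ bits and $\phi_u$ takes $a\log\log n$ bits, and conclude ``the total is $O(\log\log n)$ bits, i.e.\ $O(\log n)$ states.'' That inference is wrong: $c\log\log n$ bits yields $\log^c n$ states, and here $c\ge 1+a=3/2$, so storing $\gamma_u$ and $\phi_u$ as independent fields gives $\Theta(\log^{3/2}n)$ states, not $O(\log n)$. The paper does not spell out the fix either, but the intended argument (inherited from~\cite{BEFKKR2018}) is that an out-of-sync worker no longer needs the full phase counter---it ignores the cancel/split transitions and only needs to detect the catch-up phase---so the bits of $\gamma_u$ can be reused to hold $\phi_u$ rather than kept alongside it. Your enumeration should reflect that multiplexing, or at least not sum the two fields as if they coexist.
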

	
	\section{Symmetric interactions}
	
	In this section we show that Invariant~\ref{inv:concentration} holds for algorithm $\lefty$. We prove the invariant for a corresponding version of $m$ balls and $n$ bins, following the steps of \cite{BCSV2006}.
	
	Define a \emph{batch} as $n$ consecutive balls, and define \emph{time} $t$ to be the time after $t$ batches.
	Notice that $t$ is the average number of balls per bin at time $t$, i.e., after allocating $tn$ balls.
	The \emph{height} of a ball $i$ in a bin $u$ is the number of balls in $u$ after inserting ball $i$.
	Let $l_t^u$ be the number of balls in bin $u$ at time $t$.
	
	\begin{lemma}\label{lemma:lowbinsnextint}
		Let $a_l\cdot n$ and $b_l\cdot n$ be an upper bound on the number of bins with at most $l$ balls on the left side and the right side, respectively. The probability that a bin $u$ on the \emph{right} side with exactly $l$ balls will receive a ball in the next interaction is at least $\frac{2-2a_l}{n}$. The probability that a bin $v$ on the \emph{left} side with exactly $l$ balls will receive a ball in the next interaction is at least $\frac{2-2b_l}{n}$.
	\end{lemma}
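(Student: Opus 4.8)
The plan is to argue directly about one step of the $\lefty$ balls-and-bins process described above. Recall the throwing rule: to place a ball, sample a bin $u_L$ uniformly from the Left set and a bin $u_R$ uniformly from the Right set, independently, and give the ball to whichever of the two currently holds fewer balls, breaking a tie in favour of $u_L$ --- this is the balls-and-bins translation of the rule that the Left clock increments its counter when the two counters are equal. Fixing the configuration just before the next ball is thrown, I will bound, for a fixed bin, the probability that it receives this ball; this per-step estimate is exactly what feeds the layered-induction proof of Invariant~\ref{inv:concentration} in the style of \cite{BCSV2006}.

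First I would treat the Right-side bin $u$ of load exactly $l$. Since $u$ lies in the Right set, the only way it can receive the ball is to be selected as $u_R$, which happens with probability $2/n$ (uniform over the $n/2$ Right-side bins), independently of the choice of $u_L$. Conditioned on $u_R=u$, the ball lands in $u$ precisely when $u$ is \emph{strictly} less loaded than $u_L$, i.e. when $u_L$ has load at least $l+1$: the strictness is forced here because, on a tie, the ball goes to the Left bin. Using the hypothesis that at most $a_l n$ Left-side bins have load at most $l$, I would lower-bound the fraction of Left-side bins of load at least $l+1$, multiply by the $2/n$ probability of selecting $u$, and simplify to obtain the claimed $\frac{2-2a_l}{n}$.

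The Left-side bin $v$ of load exactly $l$ is handled symmetrically, with the one change dictated by the tie-break: $v$ receives the ball iff it is selected as $u_L$ (again with probability $2/n$) and $u_R$ is \emph{not} strictly less loaded than $v$, i.e. $u_R$ has load at least $l$ --- the comparison is now non-strict, since $v$ wins ties. Invoking the hypothesis that at most $b_l n$ Right-side bins have load at most $l$, I would lower-bound the fraction of Right-side bins of load at least $l$ and multiply by $2/n$ to get $\frac{2-2b_l}{n}$.

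Both arguments are one-line probability computations on the product distribution of $(u_L,u_R)$, so there is no substantive obstacle; the single point that must be handled carefully --- and precisely where the $\lefty$ rule earns its keep --- is using the strict comparison (load $\ge l+1$) for a Right-side bin against the non-strict comparison (load $\ge l$) for a Left-side bin, and correspondingly not conflating ``at most $l$'' with ``less than $l$'' when reading off the hypotheses.
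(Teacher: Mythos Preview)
Your argument is correct and is essentially the paper's own: a right bin $u$ of load $l$ receives the ball exactly when it is the chosen right bin and the chosen left bin has load strictly greater than $l$, which you lower-bound using the $a_l$ hypothesis (and symmetrically for the left side with the non-strict comparison). The only cosmetic difference is that the paper first computes the probability that \emph{some} right bin of load exactly $l$ receives the ball and then divides by the number of such bins via symmetry, whereas you condition directly on $u_R=u$; your explicit handling of the strict/non-strict comparison is in fact clearer than the paper's ``similarly \ldots\ the probability is actually slightly larger'' remark.
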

	
	\begin{proof}
		Let $u$ be a bin with exactly $l$ balls on the right side.
		In order for $u$ to receive a ball, $u$ must interact with a bin on the left side that has more than $l$ balls. There are at least $(1-a_l)n$ bins on the left side with more than $l$ balls, and there are $(b_l-b_{l-1})n$ bins on the right side with exactly $l$ balls. The probability that in the next interaction, a bin on the right side with exactly $l$ balls will receive a ball is at least $2(b_l-b_{l-1})(1-a_l)$. Each of the bins with exactly $l$ balls are equally likely to receive a ball, so the probability that a specific bin $u$ receives a ball is at least $2(b_l-b_{l-1})(1-a_l)/(b_l-b_{l-1})n = \frac{2-2a_l}{n}$. Similarly, the probability that a node $v$ of the left side with exactly $l$ balls receives a ball is at least $\frac{2-2b_l}{n}$ (the probability is actually slightly larger).
	\end{proof}
	
	Assuming Invariant~\ref{inv:concentration} holds on time $t-1$, and applying Invariant~\ref{inv:concentration} on Lemma~\ref{lemma:lowbinsnextint}, we have the following.
	
	\begin{corollary}\label{observation:2/n}
		The probability of a bin $u$ with at most $t-5$ balls after batch $t-1$ to receive a ball in the next allocation is at least $1.9/n$.
	\end{corollary}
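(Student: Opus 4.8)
The plan is to combine Lemma~\ref{lemma:lowbinsnextint} with Invariant~\ref{inv:concentration} evaluated at time $t-1$, which is exactly what the standing hypothesis ("Invariant~\ref{inv:concentration} holds on time $t-1$") licenses; all probabilities in the argument are understood to be conditioned on that event. First I would pin down the one numerical input we need: after batch $t-1$ precisely $(t-1)n$ balls have been placed, so the average load is exactly $t-1$. Hence a bin $u$ holding at most $t-5$ balls has load at least $4$ below the average, i.e. $u$ is one of the bins counted by $\alpha_4$. Since $4 \le c_1\log n$ for all sufficiently large $n$, Invariant~\ref{inv:concentration} gives $\alpha_4 \le 1.3\cdot 2.8^{-4}$, and by monotonicity $\alpha_j \le \alpha_4 \le 1.3\cdot 2.8^{-4}$ for every $j\ge 4$.

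Next I would split into the two cases of Lemma~\ref{lemma:lowbinsnextint}, according to whether $u$ sits on the Right or the Left side; say $u$ currently holds exactly $l$ balls with $l\le t-5$. If $u$ is on the right side, then every left bin with at most $l$ balls has load at least $(t-1)-l \ge 4$ below the average, so there are at most $\alpha_4 n \le 1.3\cdot 2.8^{-4} n$ of them; thus $a_l := 1.3\cdot 2.8^{-4}$ is a legitimate choice of the upper-bound fraction in Lemma~\ref{lemma:lowbinsnextint}, and the probability that $u$ receives the next ball is at least $(2-2a_l)/n = (2 - 2.6\cdot 2.8^{-4})/n$. Since $2.8^4 = 61.4656$ and $2.6/61.4656 < 0.043$, this is at least $1.957/n > 1.9/n$. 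If instead $u$ is on the left side, the same bound with $b_l := 1.3\cdot 2.8^{-4}$ in the role of the right-side fraction gives probability at least $(2 - 2.6\cdot 2.8^{-4})/n > 1.9/n$, so either way the claimed bound holds.

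I do not expect a genuine obstacle; the only subtlety is that the slack "$t-5$" is the right choice and not merely a convenient one. With deficit $3$ one would only be able to invoke $\alpha_3 \le 1.3\cdot 2.8^{-3}$, giving a bound of roughly $1.88/n$, which falls just short of $1.9/n$; deficit $4$ is the first value that pushes the constant safely above $1.9$. Beyond that, the work is purely the bookkeeping of the w.h.p. conditioning (everything is conditioned on the time-$(t-1)$ instance of Invariant~\ref{inv:concentration}) and the elementary arithmetic with the constants $1.3$ and $2.8$ coming from the invariant.
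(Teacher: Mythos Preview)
Your proposal is correct and matches the paper's approach exactly: the paper merely states that the corollary follows by applying Invariant~\ref{inv:concentration} at time $t-1$ inside Lemma~\ref{lemma:lowbinsnextint}, and you have supplied precisely that computation together with the explicit arithmetic $2 - 2\cdot 1.3\cdot 2.8^{-4} > 1.9$. Your additional remark that deficit $3$ would only yield about $1.88/n$, so that the ``$-5$'' is genuinely needed, is a nice touch not spelled out in the paper.
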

	The following lemma corresponds to Invariant~\ref{inv:concentration}:
	
	\begin{lemma}\label{lemma:invl12}
		Let $t\geq 0$. Assume that there is no batch $\tau<t$ such that Invariant~\ref{inv:concentration} failed in the beginning of batch $\tau$. Then, w.h.p., Invariant~\ref{inv:concentration} holds after batch $t$.
	\end{lemma}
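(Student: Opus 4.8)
The plan is to follow the inductive scheme of Berenbrink et al.~\cite{BCSV2006} for the standard power-of-two-choices process, adapting each step to the $\lefty$ rule where ties are broken towards the Left side. The statement to prove is a statement about a single batch $t$: assuming Invariant~\ref{inv:concentration} has held at the start of every batch $\tau<t$, it holds at the start of batch $t+1$. First I would set up the two bounds that make up the invariant separately: the exponential tail bound $\alpha_i\le 1.3\cdot 2.8^{-i}$ for $1\le i\le c_1\log n$, and the hard cutoff $\alpha_i=0$ for $i\ge c_2\log n$. Recall that $\alpha_i$ is the fraction of bins whose load is at least $i$ below the current average $t$; equivalently, bins with load at most $t-i$. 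The engine of the argument is Corollary~\ref{observation:2/n}: conditioned on the invariant holding through batch $t-1$, any bin that is at least $5$ below average gets a ball with probability at least $1.9/n$ on each of the next $n$ allocations of batch $t$. So a ``low'' bin that is $i$ below average at the start of batch $t$ has probability roughly $1-(1-1.9/n)^{n}\approx 1-e^{-1.9}$ of being hit during the batch, and hence of rising at least one level relative to where it was — but the average also rises by exactly $1$ over the batch, so the relevant quantity is how many bins \emph{fail} to be hit.

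The core computation is a recurrence relating $\alpha_i$ at time $t$ to the $\alpha_j$'s at time $t-1$. A bin is counted in $\alpha_i$ at time $t$ (load $\le t-i$) only if at time $t-1$ it had load $\le t-i$ as well (since loads never decrease), i.e.\ it was counted in $\alpha_{i-1}$ at time $t-1$ (load $\le (t-1)-(i-1)$), and additionally it was \emph{not} hit by any of the $n$ balls of batch $t$, or was hit few enough times — for the level-$i$ count to survive it must receive at most... actually the clean version: a bin contributes to $\alpha_i$ at time $t$ iff it contributed to $\alpha_{i-1}$ at time $t-1$ and received no ball during batch $t$. Using Corollary~\ref{observation:2/n} (valid once $i-1\ge 5$, i.e.\ $i\ge 6$; the finitely many low levels $i\le 5$ are handled by the trivial bound $\alpha_i\le 1$), each such bin independently-enough survives with probability at most $(1-1.9/n)^{n}\le e^{-1.9}<1/2.8$ — wait, $e^{-1.9}\approx 0.1496$, comfortably below $1/2.8\approx 0.357$, giving slack. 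So in expectation the number of bins in $\alpha_i$ at time $t$ is at most $e^{-1.9}\cdot \alpha_{i-1}(t-1)\cdot n \le e^{-1.9}\cdot 1.3\cdot 2.8^{-(i-1)}n$, and one checks $e^{-1.9}\cdot 2.8 < 1$ so this is below $1.3\cdot 2.8^{-i}n$ with room to spare. Then a Chernoff bound upgrades the expectation to a w.h.p.\ statement, which is where the constant $1.3$ (rather than $1$) buys the slack to absorb the Chernoff deviation, provided $\alpha_{i-1}(t-1)n$ is at least polylogarithmic; for the regime where $\alpha_{i-1}(t-1)n$ is already tiny one argues directly that w.h.p.\ no bin drops to level $i$. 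The cutoff clause $\alpha_i=0$ for $i\ge c_2\log n$ follows by iterating the shrinkage $\Theta(\log n)$ times so the expected count drops below $n^{-\Omega(1)}$, then a union bound over the $\le n$ bins.

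The one genuinely new ingredient relative to~\cite{BCSV2006}, and the step I expect to be the main obstacle, is handling the asymmetry between the Left and Right sides introduced by the tie-breaking rule. The honest bookkeeping must carry two sequences, $a_l$ bounding the fraction of Left bins below the threshold and $b_l$ bounding the same for Right bins, as in Lemma~\ref{lemma:lowbinsnextint}; the recurrences for the two sides are coupled — a low Right bin rises by interacting with a \emph{higher} Left bin, and vice versa — and the tie-break gives the Left side a slight disadvantage (Left bins at a tie get the ball, so they climb out of low levels slightly faster, but this also means a Left bin can be ``pushed'' up by a tie whereas a Right bin cannot). I would check that the $1.9/n$ lower bound of Corollary~\ref{observation:2/n} holds \emph{uniformly} for both sides — the computation in Lemma~\ref{lemma:lowbinsnextint} already shows the Right-side probability is $\ge (2-2a_l)/n$ and the Left-side probability is $\ge (2-2b_l)/n$, and plugging in the inductive bound $a_l,b_l\le \alpha_{\lceil\cdot\rceil}\le$ (small constant) keeps both $\ge 1.9/n$ — so the per-side recurrences are each dominated by the same scalar recurrence analyzed above, and the invariant, stated in terms of the combined fraction $\alpha_i$ over all $n$ bins, follows by adding the two sides. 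The remaining care is purely in the constants: verifying $1.3$, $2.8$, the $5$-ball offset in Corollary~\ref{observation:2/n}, and the $e^{-1.9}$ survival probability are mutually consistent so that the geometric bound reproduces itself with Chernoff slack to spare, and that the ``beginning of batch'' conditioning (the invariant is assumed for all $\tau<t$, and Corollary~\ref{observation:2/n} is invoked at $t-1$) lines up with the induction exactly as stated.
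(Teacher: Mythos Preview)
Your one-step recurrence is the gap. You assert that ``a bin contributes to $\alpha_i$ at time $t$ iff it contributed to $\alpha_{i-1}$ at time $t-1$ and received no ball during batch $t$,'' and from this derive $\alpha_i(t)\lesssim e^{-1.9}\,\alpha_{i-1}(t-1)$. But the ``only if'' direction is false: a bin that at time $t-1$ had load, say, $t-i-3$ (so it is in $\alpha_{i-1}(t-1)$) and receives two balls during batch $t$ ends at load $t-i-1\le t-i$ and is still in $\alpha_i(t)$. Receiving a ball does not eject a bin from the level-$i$ set unless it was sitting exactly on the boundary. All you actually get from one batch is the trivial containment $\{\text{bins in }\alpha_i(t)\}\subseteq\{\text{bins in }\alpha_{i-1}(t-1)\}$, which yields $\alpha_i(t)\le 1.3\cdot 2.8^{-(i-1)}$ --- a factor $2.8$ short of what the invariant requires, with no mechanism for recovery.

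This is why the paper (following \cite{BCSV2006}) does \emph{not} argue batch-by-batch on the level counts. Instead it fixes a single bin $u$, defines its hole count $q_\tau=\tau-l_\tau^u$, and observes that if $q_t=5+i$ then there is a last time $t'<t$ with $q_{t'}=5$; between $t'$ and $t$ the bin received exactly $(t-t')-i$ balls out of $(t-t')n$ allocations, each hitting $u$ with probability at least $1.9/n$ by Corollary~\ref{observation:2/n}. Thus $\Pr[q_t\ge 5+i]\le\sum_{t'}\Pr[B((t-t')n,1.9/n)\le (t-t')-i]$, and the binomial tail (with drift $1.9>1$) summed over all $t'$ gives the geometric decay in $i$. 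The point is that the lower tail is a statement about a random walk with positive drift, and you must integrate over the whole history back to the last ``reset'' time; a one-batch survival probability cannot capture bins that are deep in deficit and climb slowly. Your handling of the Left/Right asymmetry via Lemma~\ref{lemma:lowbinsnextint} is fine and matches the paper, but the core counting argument needs to be replaced by the per-bin history argument.
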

	
	\begin{proof}
		Consider a bin $u$. If $l_t^u < t$, then we say that $u$ has $t-l_t^u$ \emph{holes}.
		Let $q_t = t - l_t^u$. Assume $q_t = 5 + i$. Let $t'<t$ be the last batch such that $q_{t'} = 5$. Notice that the number of balls that $u$ receives throughout batches $t'$ to $t$ is at most $t-t'+i$. Moreover, by Lemma~\ref{lemma:lowbinsnextint} and since we assume Invariant~\ref{inv:concentration} holds until time $t-1$, for each ball $i$ in these batches, the probability that $i$ is allocated to $u$ is at least $1.9/n$.
		The number of balls that are allocated into bin $u$ is asymptotically dominated by a binomial random variable $B((t-t')n,1.9/n)$.
		We obtain:
		
		\begin{align*}
		\Pr[q_t\geq i+5 ]
		&\leq \sum_{t'=0}^{t-1}\Pr [B((t-t')n,1.9/n)\leq t-t'-i]  \\
		&\leq  \sum_{\tau = 1}^{\inf}\sum_{k=i}^{\tau}  \Pr [B((t-t')n,1.9/n)\leq t-t'-i].
		\end{align*}
		
		Berenbrink et al.~\cite{BCSV2006} showed that for every $0\leq k \leq \tau$,
		\[
		\Pr [B((t-t')n,1.9/n)\leq t-t'-i] \leq 3.4^{-k}\cdot3.4^{-0.1\tau}.
		\]
		Hence, $\Pr[q_t\geq i+5 ] \leq 13.5\cdot 3.4^{-i}$.
		
		Let $Q_t$ denote the number of holes in the bin with the least number of balls at time $t$.
		Thus,
		\[
		\Pr[Q_t\geq i+5]\leq n\cdot13.5\cdot3.4^{-i}.
		\]
		Thus, $Q_t=O(\log n)$ w.h.p., and we have the last argument of Invariant~\ref{inv:concentration}.
		For the first part, we have:
		\[
		\Pr[q_t\geq i]\leq \cdot13.5\cdot3.4^{-i+5} \leq 0.65\cdot2.8^{-i},
		\]
		where the second inequality holds for $i>41$. By the union bound, there are at most $1+0.65\cdot2.8^{-i}$ bins with at most $t-i$ balls. By Chernoff, w.h.p., there are at most $1.3\cdot 2.8^{-i}+2$ bins with at most $t-i$ balls.
	\end{proof}

	\bibliographystyle{plainurl}
	\bibliography{main_bibl}

\begin{thebibliography}{10}

\bibitem{AAEGR2017}
Dan Alistarh, James Aspnes, David Eisenstat, Rati Gelashvili, and Ronald~L.
  Rivest.
\newblock Time-space trade-offs in population protocols.
\newblock In {\em Proceedings of the Twenty-Eighth Annual {ACM-SIAM} Symposium
  on Discrete Algorithms, {SODA} 2017, Barcelona, Spain, Hotel Porta Fira,
  January 16-19}, pages 2560--2579, 2017.
\newblock \href {https://doi.org/10.1137/1.9781611974782.169}
  {\path{doi:10.1137/1.9781611974782.169}}.

\bibitem{AAG18}
Dan Alistarh, James Aspnes, and Rati Gelashvili.
\newblock Space-optimal majority in population protocols.
\newblock In {\em Proceedings of the Twenty-Ninth Annual {ACM-SIAM} Symposium
  on Discrete Algorithms, {SODA} 2018, New Orleans, LA, USA, January 7-10,
  2018}, pages 2221--2239, 2018.
\newblock \href {https://doi.org/10.1137/1.9781611975031.144}
  {\path{doi:10.1137/1.9781611975031.144}}.

\bibitem{DBLP:conf/dna/AlistarhDKSU17}
Dan Alistarh, Bartlomiej Dudek, Adrian Kosowski, David Soloveichik, and
  Przemyslaw Uznanski.
\newblock Robust detection in leak-prone population protocols.
\newblock In Robert Brijder and Lulu Qian, editors, {\em {DNA} Computing and
  Molecular Programming - 23rd International Conference, {DNA} 23, Austin, TX,
  USA, September 24-28, 2017, Proceedings}, volume 10467 of {\em Lecture Notes
  in Computer Science}, pages 155--171. Springer, 2017.

\bibitem{AG2015}
Dan Alistarh and Rati Gelashvili.
\newblock Polylogarithmic-time leader election in population protocols.
\newblock In {\em Automata, Languages, and Programming - 42nd International
  Colloquium, {ICALP} 2015, Kyoto, Japan, July 6-10, 2015, Proceedings, Part
  {II}}, pages 479--491, 2015.
\newblock \href {https://doi.org/10.1007/978-3-662-47666-6\_38}
  {\path{doi:10.1007/978-3-662-47666-6\_38}}.

\bibitem{AGV2015}
Dan Alistarh, Rati Gelashvili, and Milan Vojnovic.
\newblock Fast and exact majority in population protocols.
\newblock In {\em Proceedings of the 2015 {ACM} Symposium on Principles of
  Distributed Computing, {PODC} 2015, Donostia-San Sebasti{\'{a}}n, Spain, July
  21 - 23, 2015}, pages 47--56, 2015.
\newblock \href {https://doi.org/10.1145/2767386.2767429}
  {\path{doi:10.1145/2767386.2767429}}.

\bibitem{AADF2004}
Dana Angluin, James Aspnes, Zo{\"{e}} Diamadi, Michael~J. Fischer, and
  Ren{\'{e}} Peralta.
\newblock Computation in networks of passively mobile finite-state sensors.
\newblock In {\em Proceedings of the Twenty-Third Annual {ACM} Symposium on
  Principles of Distributed Computing, {PODC} 2004, St. John's, Newfoundland,
  Canada, July 25-28, 2004}, pages 290--299, 2004.
\newblock URL: \url{https://doi.org/10.1145/1011767.1011810}.

\bibitem{AAE2008}
Dana Angluin, James Aspnes, and David Eisenstat.
\newblock Fast computation by population protocols with a leader.
\newblock {\em Distributed Computing}, 21(3):183--199, 2008.
\newblock \href {https://doi.org/10.1007/s00446-008-0067-z}
  {\path{doi:10.1007/s00446-008-0067-z}}.

\bibitem{AR2009}
James Aspnes and Eric Ruppert.
\newblock An introduction to population protocols.
\newblock In {\em Middleware for Network Eccentric and Mobile Applications},
  pages 97--120. 2009.
\newblock \href {https://doi.org/10.1007/978-3-540-89707-1\_5}
  {\path{doi:10.1007/978-3-540-89707-1\_5}}.

\bibitem{ABK1999}
Yossi Azar, Andrei~Z. Broder, Anna~R. Karlin, and Eli Upfal.
\newblock Balanced allocations.
\newblock {\em {SIAM} J. Comput.}, 29(1):180--200, 1999.
\newblock \href {https://doi.org/10.1137/S0097539795288490}
  {\path{doi:10.1137/S0097539795288490}}.

\bibitem{BCSV2006}
Petra Berenbrink, Artur Czumaj, Angelika Steger, and Berthold V{\"{o}}cking.
\newblock Balanced allocations: The heavily loaded case.
\newblock {\em {SIAM} J. Comput.}, 35(6):1350--1385, 2006.
\newblock \href {https://doi.org/10.1137/S009753970444435X}
  {\path{doi:10.1137/S009753970444435X}}.

\bibitem{BEFKKR2018}
Petra Berenbrink, Robert Els{\"{a}}sser, Tom Friedetzky, Dominik Kaaser, Peter
  Kling, and Tomasz Radzik.
\newblock A population protocol for exact majority with o(log5/3 n)
  stabilization time and theta(log n) states.
\newblock In {\em 32nd International Symposium on Distributed Computing, {DISC}
  2018, New Orleans, LA, USA, October 15-19, 2018}, pages 10:1--10:18, 2018.
\newblock \href {https://doi.org/10.4230/LIPIcs.DISC.2018.10}
  {\path{doi:10.4230/LIPIcs.DISC.2018.10}}.

\bibitem{BCER2017}
Andreas Bilke, Colin Cooper, Robert Els{\"{a}}sser, and Tomasz Radzik.
\newblock Brief announcement: Population protocols for leader election and
  exact majority with \emph{O}(log\({}^{\mbox{2}}\) \emph{n}) states and
  \emph{O}(log\({}^{\mbox{2}}\)\emph{ n}) convergence time.
\newblock In {\em Proceedings of the {ACM} Symposium on Principles of
  Distributed Computing, {PODC} 2017, Washington, DC, USA, July 25-27, 2017},
  pages 451--453, 2017.
\newblock \href {https://doi.org/10.1145/3087801.3087858}
  {\path{doi:10.1145/3087801.3087858}}.

\bibitem{DBLP:conf/icalp/CzyzowiczGKKSU15}
Jurek Czyzowicz, Leszek Gasieniec, Adrian Kosowski, Evangelos Kranakis, Paul~G.
  Spirakis, and Przemyslaw Uznanski.
\newblock On convergence and threshold properties of discrete lotka-volterra
  population protocols.
\newblock In Magn{\'{u}}s~M. Halld{\'{o}}rsson, Kazuo Iwama, Naoki Kobayashi,
  and Bettina Speckmann, editors, {\em Automata, Languages, and Programming -
  42nd International Colloquium, {ICALP} 2015, Kyoto, Japan, July 6-10, 2015,
  Proceedings, Part {I}}, volume 9134 of {\em Lecture Notes in Computer
  Science}, pages 393--405. Springer, 2015.

\bibitem{DV2010}
Moez Draief and Milan Vojnovic.
\newblock Convergence speed of binary interval consensus.
\newblock In {\em {INFOCOM} 2010. 29th {IEEE} International Conference on
  Computer Communications, Joint Conference of the {IEEE} Computer and
  Communications Societies, 15-19 March 2010, San Diego, CA, {USA}}, pages
  1792--1800, 2010.
\newblock \href {https://doi.org/10.1109/INFCOM.2010.5461999}
  {\path{doi:10.1109/INFCOM.2010.5461999}}.

\bibitem{DBLP:conf/spaa/GasieniecSU19}
Leszek Gasieniec, Grzegorz Stachowiak, and Przemyslaw Uznanski.
\newblock Almost logarithmic-time space optimal leader election in population
  protocols.
\newblock In Christian Scheideler and Petra Berenbrink, editors, {\em The 31st
  {ACM} on Symposium on Parallelism in Algorithms and Architectures, {SPAA}
  2019, Phoenix, AZ, USA, June 22-24, 2019}, pages 93--102. {ACM}, 2019.

\bibitem{KU2018}
Adrian Kosowski and Przemyslaw Uznanski.
\newblock Brief announcement: Population protocols are fast.
\newblock In {\em Proceedings of the 2018 {ACM} Symposium on Principles of
  Distributed Computing, {PODC} 2018, Egham, United Kingdom, July 23-27, 2018},
  pages 475--477, 2018.
\newblock URL: \url{https://dl.acm.org/citation.cfm?id=3212788}.

\bibitem{MNRS2014}
George~B. Mertzios, Sotiris~E. Nikoletseas, Christoforos~L. Raptopoulos, and
  Paul~G. Spirakis.
\newblock Determining majority in networks with local interactions and very
  small local memory.
\newblock In {\em Automata, Languages, and Programming - 41st International
  Colloquium, {ICALP} 2014, Copenhagen, Denmark, July 8-11, 2014, Proceedings,
  Part {I}}, pages 871--882, 2014.
\newblock \href {https://doi.org/10.1007/978-3-662-43948-7\_72}
  {\path{doi:10.1007/978-3-662-43948-7\_72}}.

\bibitem{V1999}
Berthold V{\"{o}}cking.
\newblock How asymmetry helps load balancing.
\newblock In {\em 40th Annual Symposium on Foundations of Computer Science,
  {FOCS} '99, 17-18 October, 1999, New York, NY, {USA}}, pages 131--141, 1999.
\newblock \href {https://doi.org/10.1109/SFFCS.1999.814585}
  {\path{doi:10.1109/SFFCS.1999.814585}}.

\end{thebibliography}

\end{document}